\documentclass[10pt,a4paper]{article}

\usepackage[T1]{fontenc}
\usepackage[utf8]{inputenc} 
\usepackage{lmodern}

\usepackage[a4paper,margin=2.5cm]{geometry}
\usepackage{microtype}

\usepackage{amsmath,amssymb,amsfonts,amsthm,mathtools}

\usepackage{graphicx}
\usepackage{booktabs}
\usepackage{multirow}
\usepackage{array}
\usepackage{tabularx}
\usepackage{caption}
\usepackage{subcaption}

\usepackage{enumitem}

\usepackage{algorithm}
\usepackage{algpseudocode}

\usepackage{siunitx}
\usepackage{xspace}
\usepackage{url}

\usepackage{graphicx}
\usepackage{booktabs}
\usepackage{multirow}
\usepackage{makecell}
\usepackage{diagbox}
\usepackage[table]{xcolor}
\usepackage{caption}

\usepackage[colorlinks=true,linkcolor=blue,citecolor=blue,urlcolor=blue]{hyperref}

\usepackage[nameinlink,capitalize,noabbrev]{cleveref}

\newtheorem{theorem}{Theorem}

\newtheorem{lemma}{Lemma}
\theoremstyle{definition}
\newtheorem{definition}{Definition}

\usepackage{orcidlink}

\usepackage[nameinlink,capitalize,noabbrev]{cleveref}
\usepackage{titlesec}

\crefname{appendix}{appendix}{appendices}
\Crefname{appendix}{Appendix}{Appendices}

\newcommand{\keywords}[1]{%
  \par\smallskip\noindent\textbf{Keywords: }#1
}

\begin{document}

\title{Polyhedral Unmixing: Bridging Semantic Segmentation with Hyperspectral Unmixing via Polyhedral-Cone Partitioning\\[0.25em]}

\author{
Antoine Bottenmuller\,\orcidlink{0009-0007-3943-3419}, 
Etienne Decenci\`ere\,\orcidlink{0000-0002-1349-8042} and 
Petr Dokladal\,\orcidlink{0000-0002-6502-7461}
\\[1.0em]
\textit{Mines Paris, PSL University, Centre for Mathematical Morphology (CMM)}\\
\textit{77300 Fontainebleau, France}\\[0.5em]
\texttt{\normalsize\{antoine.bottenmuller, etienne.decenciere, petr.dokladal\}@minesparis.psl.eu}\\[1.5em]
{\normalfont Code: \url{https://github.com/antoine-bottenmuller/polyhedral-unmixing}}
}

\date{}

\maketitle

\begin{abstract}

Semantic segmentation and hyperspectral unmixing are two central problems in spectral image analysis. The former assigns each pixel a discrete label corresponding to its material class, whereas the latter estimates pure material spectra, called endmembers, and, for each pixel, a vector representing material abundances in the observed scene. Despite their complementarity, these two problems are usually addressed independently. This paper aims to bridge these two lines of work by formally showing that, under the linear mixing model, pixel classification by dominant materials induces polyhedral-cone regions in the spectral space. We leverage this fundamental property to propose a direct segmentation-to-unmixing pipeline that performs blind hyperspectral unmixing from any semantic segmentation by constructing a polyhedral-cone partition of the space that best fits the labeled pixels. Signed distances from pixels to the estimated regions are then computed, linearly transformed via a change of basis in the distance space, and projected onto the probability simplex, yielding an initial abundance estimate. This estimate is used to extract endmembers and recover final abundances via matrix pseudo-inversion. Because the segmentation method can be freely chosen, the user gains explicit control over the unmixing process, while the rest of the pipeline remains essentially deterministic and lightweight. Beyond improving interpretability, experiments on three real datasets demonstrate the effectiveness of the proposed approach when associated with appropriate clustering algorithms, and show consistent improvements over recent deep and non-deep state-of-the-art methods. 

\end{abstract}

\keywords{hyperspectral imaging, classification, blind unmixing, convex partition, polyhedral cone}

\section{Introduction}
\label{sec:intro}

\begin{figure*}[t]
    \centering
    \includegraphics[width=1.0\linewidth]{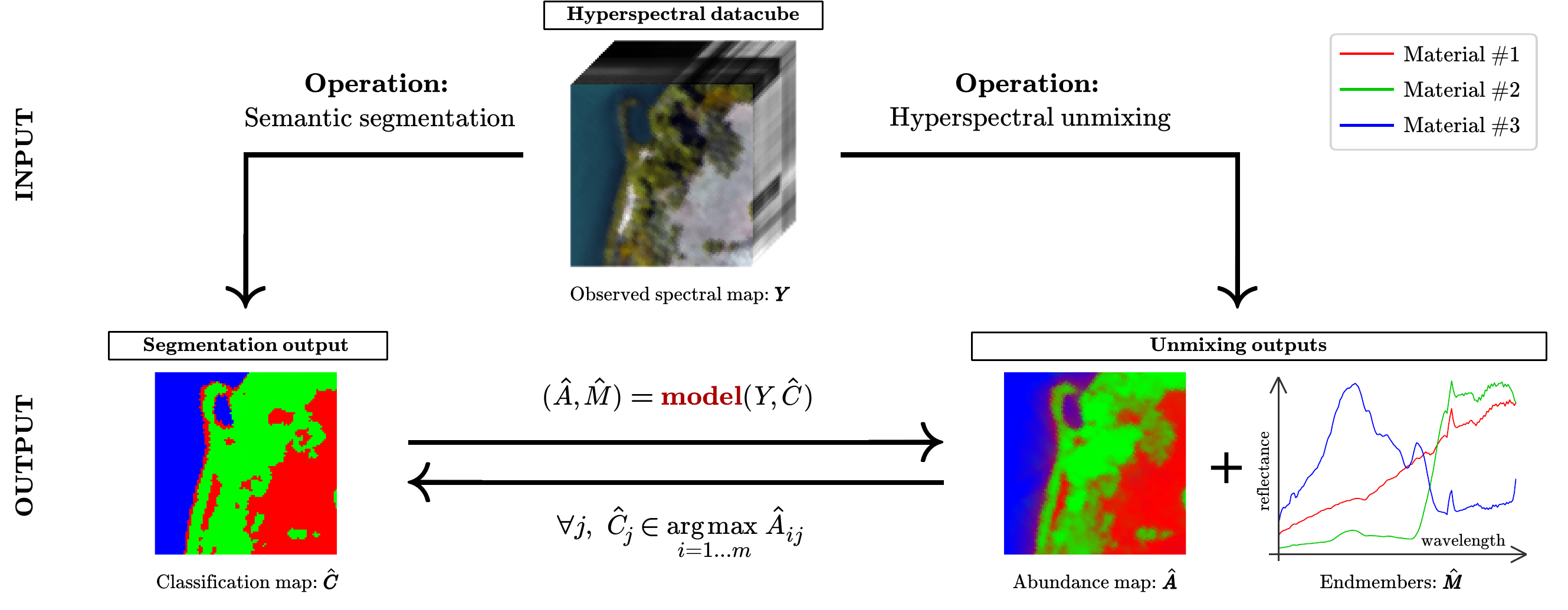}
    \caption{Bridge between semantic segmentation and hyperspectral unmixing. From the unmixing data, the dominant-material classification map is directly obtained by taking the $\arg\max$ over the abundances for each pixel. In the opposite way, a model is needed.}
    \label{fig:segm_vs_unmix}
\end{figure*}

In remote sensing, hyperspectral imaging systems acquire information about an object or a scene without physical contact~\cite{bhargava2024hyperspectral,khan2018modern}. They capture discretized spectral signatures of materials, typically representing their absorbance or reflectance, over tens to hundreds of contiguous, narrow bands~\cite{landgrebe2002hyperspectral}. 
A hyperspectral dataset is therefore a particular type of image where, for $d$ bands, each pixel is an element in the Euclidean space $\mathbb{R}^d$, referred to as the \textit{spectral space}~\cite{bioucas2013hyperspectral}, where the $i$-th entry is the spectral response at the $i$-th band. 

Because of limited spatial resolution or the intrinsic heterogeneity of the observed medium, the spectral signature of a single pixel results from a mixture of multiple
materials~\cite{rasti2024image,bioucas2012hyperspectral}. 
The spectral signatures of pure materials are referred to as \textit{endmembers}, and their proportions within a pixel are \textit{abundances}~\cite{rasti2024image,wei2020overview,bioucas2012hyperspectral}. 
The relationship between observed signatures, endmembers, and abundances is defined by a \textit{mixing model}. We focus on the standard widely-used \textit{linear mixing model} (LMM)~\cite{keshava2002spectral,mendoza2024blind}, which neglects nonlinear interactions by modelling each pixel as a linear combination of endmembers weighted by their abundances.

Given a dataset $Y$, hyperspectral image analysis typically addresses two central problems~\cite{bioucas2013hyperspectral,plaza2009recent}: 

\textbf{1. Semantic segmentation} (or \textit{Classification}~\cite{bioucas2013hyperspectral}). It assigns each pixel of $Y$ a label representing its material class, yielding a \textit{classification map} $C$ \cite{kang2018building,soucy2023ceu}.

\textbf{2. Hyperspectral unmixing}. It estimates pure material signatures (endmembers $M$) and their pixelwise fractions (abundances $A$) \cite{duan2023undat,wei2020overview,wei2016multiband}.

Research on semantic segmentation for hyperspectral image classification spans supervised and unsupervised, deep and non-deep, techniques~\cite{lv2023deep,pan2020dssnet,trajanovski2020tongue}. A particular case is \textit{dominant-material} segmentation (or classification)~\cite{charles2011learning,fang2018unsupervised}, where the assigned class corresponds to the material with maximal abundance in each mixed pixel. We adopt this setting in this paper. Hyperspectral unmixing is more demanding, as it seeks both $M$ and $A$. It has been approached via geometrical, statistical, nonnegative matrix factorization, and deep-learning methods~\cite{bioucas2012hyperspectral,wei2020overview,bhatt2020deep}. Recent works include autoencoder-based models and deep unrolling of classical algorithms, frequently achieving stronger performance~\cite{rasti2024image}.

We place ourselves in the \textit{blind} setting, where no prior information about pure materials is provided, and only unsupervised techniques are allowed~\cite{rasti2023sunaa,rasti2024image}. Optionally, only the number of materials is known.
In hyperspectral semantic segmentation, the methods often leverage clustering or classical unsupervised techniques performed in the spectral space (e.g., $k$-means, Gaussian mixture models), with spatial priors~\cite{soucy2023ceu}. 
Classical blind linear unmixing follows a two-step pipeline~\cite{zhong2016blind,wei2020overview}: (i) endmember extraction, then (ii) abundance estimation. 
Dominant-material labels follow by taking the $\arg\max$ of the abundances in each pixel~\cite{ibarrola2019hyperspectral}. A preprocessing step is also often applied~\cite{keshava2002spectral}. Some statistical and deep learning-based approaches jointly estimate $M$ and $A$~\cite{bioucas2012hyperspectral,rasti2024image}.

Despite their complementarity, these two problems are usually addressed independently~\cite{keshava2002spectral,mendoza2024blind}. While dominant-material segmentation $\hat{C}$ is directly derived from unmixing outputs $\hat{A}$ and $\hat{M}$~\cite{ibarrola2019hyperspectral}, the reverse direction, i.e., performing unmixing from $Y$ and $\hat{C}$, requires a \textit{model} that, to our knowledge, has not been formulated yet. 
\Cref{fig:segm_vs_unmix} illustrates the link between these two problems. 

\textbf{Contributions.} We propose a segmentation-to-unmixing model through a new blind linear unmixing approach by rethinking the standard pipeline, bridging semantic segmentation with hyperspectral unmixing. We prove that, under the LMM, dominant-material classification induces a partition of the spectral space into polyhedral cones. 
Leveraging this property, and given any semantic segmentation, we fit a polyhedral-cone partition to the labeled pixels
; compute signed distances from pixels to the corresponding convex regions; 
and project onto the probability simplex to obtain an initial abundance estimate. This estimate is then used to recover endmembers and final abundances. 

This reversed-pipeline strategy makes segmentation drive unmixing, giving users explicit control through the choice of segmentation method, while the subsequent steps are lightweight and reproducible. 
Beyond improving interpretability, the approach opens new paths to hyperspectral unmixing. Experiments on three widely used real datasets, namely Samson, Urban and Jasper Ridge, show consistent improvements over recent deep and non-deep baselines.

Contrarily to existing blind unmixing methods incorporating segmentation information, which use clustering to select \textit{candidate pure pixels} and then apply standard endmember extraction~\cite{martin2011region,martin2012spatial,kowkabi2016fast,xu2018regional}, our model derives a direct segmentation-to-unmixing pipeline based on a fundamental polyhedral-based geometric property of dominant-material regions under the LMM that we formalize and prove in this paper. While our method is compatible with supervised segmentation methods, we focus here on unsupervised pixel classification (clustering) algorithms to perform blind unmixing.\\

The remainder of the paper is organized as follows. \Cref{sec:related} recalls the fundamentals of blind unmixing under the LMM, reviews benchmark methods and highlights the standard processing chain. \Cref{sec:method} establishes the polyhedral-cone partitioning property and presents our segmentation-driven unmixing algorithm. \Cref{sec:experiments} reports experimental results and their comparisons to eight baseline methods. \Cref{sec:conclusion} concludes and outlines future directions.

\section{Related Work}
\label{sec:related}

This section aims to give background knowledge about blind hyperspectral unmixing under linear mixing assumption, and fixes notations used throughout the paper.

\subsection{Linear Mixing Model}

The LMM models the observed data $Y$ as linear combinations of endmembers $M$ weighted by their abundances $A$. If an image contains $n$ pixels, each represented by a spectrum of $d$ bands, and is assumed to involve $m$ material endmembers, the LMM is then formulated by the matrix equation
\begin{equation}
\label{eq:yma}
    Y = MA + E ,
\end{equation}
\noindent where $Y \in \mathbb{R}^{d \times n}$ is the matrix stacking the $n$ pixel spectra; $M \in \mathbb{R}^{d \times m}$ is the matrix of the $m$ endmember spectra; $A \in \mathbb{R}^{m \times n}$ is the abundance matrix (every real $A_{ij}$ is the proportion of the $i$-th endmember in the $j$-th pixel); and $E \in \mathbb{R}^{d \times n}$ is a residual zero-mean additive Gaussian noise with low variance~\cite{duan2023undat}. $A$ satisfies, for all $j=1,\ldots,n$,
\begin{equation}
\label{eq:abundance_constraints}
    \left\{
    \begin{array}{l}
        A_{ij} \geq 0 \text{ for } i=1,\ldots,m\\
        \sum_{i=1}^m A_{ij} = 1 
    \end{array}
    \right. .
\end{equation}
Constraints \eqref{eq:abundance_constraints} ensure that the data $Y$, when neglecting noise $E$ in \eqref{eq:yma}, lie within the $(m-1)$-simplex defined by the endmembers $M$ in the spectral space~\cite{wei2020overview,bioucas2012hyperspectral}.

%
%
%
\subsection{Existing Approach Overview}

In the \textit{blind} scenario, the objective of hyperspectral unmixing is to recover both the matrices $M$ and $A$ in~\eqref{eq:yma}, whose estimations are denoted by $\hat{M}$ and $\hat{A}$, respectively, using only the observed data $Y$~\cite{rasti2023sunaa,rasti2024image}. This constitutes a blind source separation (BSS) problem~\cite{zhong2016blind,xia2011independent,bofill2001underdetermined}. 
Numerous algorithms have been developed to address the blind linear unmixing problem, which, based on three comprehensive surveys~\cite{rasti2024image,wei2020overview,bioucas2012hyperspectral}, can be broadly grouped into four main categories.

\textbf{Geometrical approaches} focus on endmember extraction by exploiting the geometry of the data cloud $Y$ and the endmember simplex to determine $M$~\cite{bioucas2012hyperspectral}. They fall into two main groups: (i) \textit{minimum volume} (MV) algorithms, which fit the smallest-volume enclosing simplex, such as the minimum volume simplex analysis (MVSA)~\cite{li2015minimum} or the simplex identification via split augmented Lagrangian (SISAL)~\cite{huang2022sisal}; and (ii) \textit{pure pixel} (PP) algorithms, which assume endmembers are present among the pixels and identify them directly, such as the vertex component analysis (VCA)~\cite{nascimento2005vertex}.

\textbf{Nonnegative matrix factorization (NMF)} finds $\hat{M}$ and $\hat{A}$ by minimizing the squared Frobenius norm $\|Y-\hat{M}\hat{A}\|_F^2$ with an additional regularization term on $\hat{M}$, under nonnegativity constraints and abundance constraints~\eqref{eq:abundance_constraints}~\cite{feng2022hyperspectral}. Notable variants include the minimum volume constrained-nonnegative matrix factorization (MVC-NMF)~\cite{miao2007endmember}, which incorporates a minimum-volume penalty, and the iterative constrained endmembers (ICE) algorithm~\cite{berman2004ice}, which regularizes using the cumulative distance between endmembers.

\textbf{Statistical approaches} incorporate prior knowledge to regularize the solution space and perform maximum a posteriori (MAP) inference, typically under Gaussian likelihoods~\cite{bioucas2012hyperspectral,dobigeon2009joint}. Under Gaussian noise assumption, this leads to constrained optimization problems similar in form to NMF~\cite{wei2020overview}. Examples include sparse Bayesian learning~\cite{chen2014hyperspectral}, and dependent component analysis (DECA)~\cite{nascimento2011hyperspectral} which assumes a mixture of Dirichlet densities as a prior for the abundances.

\textbf{Deep learning methods} 
often recast classical unmixing models into deep neural architectures~\cite{bhatt2020deep,duan2023undat}. 
Autoencoder variants map abundances to the latent space (the code) and endmembers to the linear decoder weights, with training typically minimizing a regularized Frobenius loss similar in form to NMF~\cite{palsson2018hyperspectral,palsson2022blind}; popular examples include EndNet~\cite{ozkan2018endnet} and the minimum simplex convolutional network (MiSiCNet)~\cite{rasti2022misicnet}. Unrolling approaches embed iterative solvers (e.g., the alternating direction method of multipliers (ADMM) or NMF) into trainable networks, such as ADMMNet~\cite{zhou2021admm} or SNMF-Net~\cite{xiong2021snmf}.

\begin{figure*}[t]
    \centering
    \includegraphics[width=1.0\linewidth]{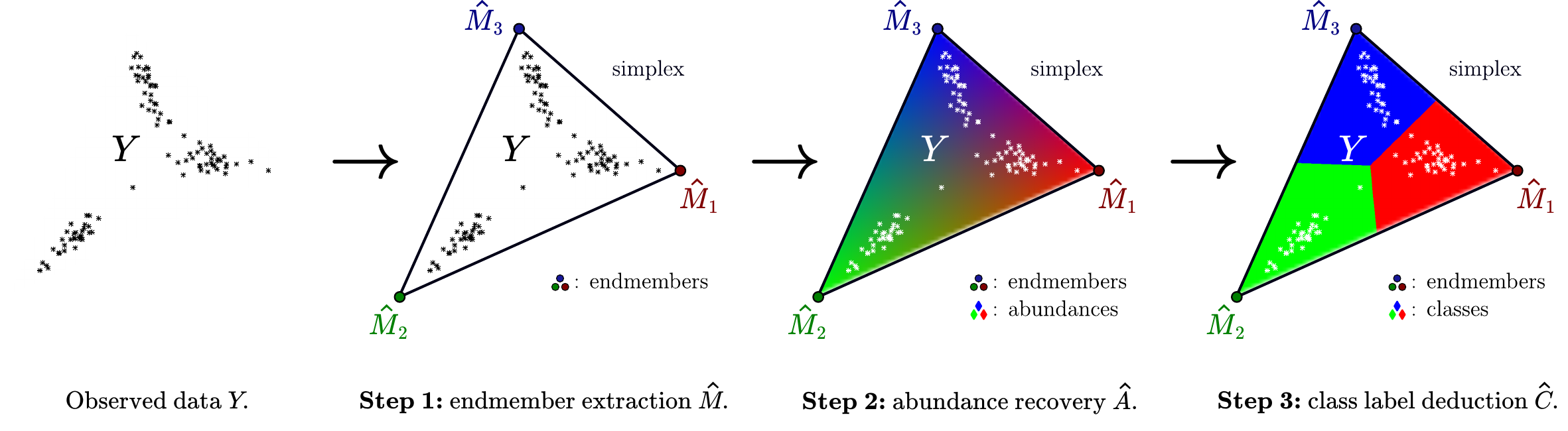}
    \caption{Illustration, in the spectral space, of the standard pipeline for blind linear umixing. 2D cross-sections of a 3D space.}
    \label{fig:standard_chain}
\end{figure*}

\subsection{Standard Processing Chain}

Given a number $m$ of endmembers, set a priori or determined using a specific method such as \textit{virtual dimensionality}~\cite{chang2004estimation}, most blind linear unmixing methods follow a three-step processing chain~\cite{zhong2016blind,rasti2024image,wei2020overview}, illustrated in \Cref{fig:standard_chain}:

\begin{enumerate}
    \item \textbf{Endmember extraction}: the endmember matrix $\hat{M}$ is estimated from the observed spectral data $Y$ alone, using a chosen unsupervised endmember extractor~\cite{bioucas2012hyperspectral,rasti2024image} such as the ones discussed in the previous subsection.
    
    \item \textbf{Abundance recovery}: sometimes performed jointly with endmember extraction, this \textit{inversion step}~\cite{eches2011variational,keshava2002spectral} estimates the abundance matrix $\hat{A}$ from both $Y$ and $\hat{M}$. This is done either 
    by singular value decomposition (SVD) matrix inversion~\cite{van2012collinearity} 
    or via least squares or regression-based methods~\cite{wei2015fast,xiong2021snmf}.
    
    \item \textbf{(Optional) Data labeling}: each pixel can be assigned a dominant-material label by maximum abundance classification~\cite{ibarrola2019hyperspectral}. The class $\hat{C}_j$ of pixel $j$ is given by 
    \begin{equation}
        \hat{C}_j \in \arg\max_{i = 1,\ldots,m} \hat{A}_{ij} .
    \end{equation}
\end{enumerate}

Pre-processing steps such as luminance correction and dimensionality reduction are usually applied to the original dataset $Y$~\cite{bioucas2013hyperspectral,keshava2002spectral}, 
by projecting the data onto a lower-dimensional subspace, typically via principal component analysis (PCA)~\cite{bioucas2012hyperspectral,rasti2024image}.

Outlining the traditional unmixing chain provides a useful framework for understanding our proposed method, which inverts the standard sequence. 

\section{Proposed Unmixing Method}
\label{sec:method}

We state in this section our main theorem on the geometric structure of the spectral space under the LMM, and derive from it a new linear hyperspectral unmixing method driven by semantic segmentation in the blind setting. 
The proofs of \cref{lem:convexity,lem:partition} and of \cref{them:main} are provided in \Cref{app:proofs}.

\subsection{The Polyhedral-Cone Partition Theorem}

We work in the Euclidean space $\mathbb{R}^d$ modeling the spectral space, and begin with the definitions of the following fundamental geometric objects. 

\begin{definition}[Halfspace]
\label{def:halfspace}
A subset $H$ of $\mathbb{R}^d$ is a (closed) halfspace if there exist $w \in \mathbb{R}^{d*}$ and $b \in \mathbb{R}$ such that
\begin{equation}
\label{eq:halfspace}
    H = \left\{ x \in \mathbb{R}^d ~\middle|~ \langle x , w \rangle \leq b \right\} .
\end{equation}
It is exactly one of the two closed subsets of $\mathbb{R}^d$ resulting in the split of the space by an affine hyperplane~{\normalfont{\cite{matouvsek2007understanding}}}. 
\end{definition}

\begin{definition}[Polyhedron]
\label{def:polyhedron}
A (convex) polyhedron, or \textit{polyhedral set}, of $\mathbb{R}^d$ is the intersection of finitely-many closed halfspaces of $\mathbb{R}^d$~{\normalfont{\cite{bruns2009polytopes}}}.
\end{definition}

\begin{definition}[Polyhedral Cone]
\label{def:cone}
A \textit{polyhedral cone} of $\mathbb{R}^d$ is a polyhedron whose boundary hyperplanes all pass through the origin $0_d$ (i.e., $b=0$)~{\normalfont{\cite{barker1981theory,amelunxen2017intrinsic}}}.
\end{definition}

Polyhedral sets, as intersections of convex sets, are then also convex. A polyhedral cone is a particular polyhedron. We next formalize the notion of \textit{dominant-material regions}.

\begin{definition}[Dominant-Material Region]
\label{def:region}
Given a class $c \in \{1,\ldots,m\}$, its dominant-material region $\mathcal{R}_c$ of $\mathbb{R}^d$ is
\begin{equation}
\label{eq:region}
    \mathcal{R}_c := \left\{ x \in \mathbb{R}^d ~\middle|~ c \in \arg\max_{i = 1, \ldots, m} a_i \right\} ,
\end{equation}
where, for each $x \in \mathbb{R}^d$, the $a_i \in \mathbb{R}$ are the material abundances associated with $x$ over the endmembers $M_{:i} \in \mathbb{R}^d$.
\end{definition}

Assuming the endmembers $\{M_{:i}\}_{i=1}^m$ are linearly independent, under the LMM, the abundances $a_i$ are exactly the unique linear coefficients $\lambda_i$ of $x$ over the $M_{:i}$, where
\begin{equation}
\label{eq:linear_combination}
    x = \sum_{i=1}^m \lambda_i M_{:i} + y, \quad y \in (\text{span}\{M_{:i}\}_{i=1}^m)^\perp.
\end{equation}

Each region is thus defined by the $\arg\max$ over the $\lambda_i$, which is preserved under convex combinations. This yields our first lemma:

\begin{lemma}[Region convexity]
\label{lem:convexity}
Under linear mixture assumption, dominant-material regions $\mathcal{R}_c$ of $\mathbb{R}^d$ are convex.
\end{lemma}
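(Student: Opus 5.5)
The plan is to exploit the linearity of the coefficient map and write each region as an intersection of halfspaces. Assume, as in the paragraph preceding the statement, that the endmembers $\{M_{:i}\}_{i=1}^m$ are linearly independent. Then every $x \in \mathbb{R}^d$ has a unique decomposition \eqref{eq:linear_combination}. The first step is to check that the map $\lambda : x \mapsto (\lambda_1(x),\ldots,\lambda_m(x))$ is linear. Concretely, $\lambda(x) = M^+ x$ with $M^+ = (M^\top M)^{-1}M^\top$ the Moore--Penrose pseudo-inverse. Indeed, $P = MM^+$ is the orthogonal projector onto $\text{span}\{M_{:i}\}$, so $y = x - MM^+x$ lies in the orthogonal complement. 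Uniqueness of the decomposition then gives $\lambda(x) = M^+x$. Equivalently, one can argue directly: if $x = M\lambda + y$ and $x' = M\lambda' + y'$, then $tx+(1-t)x' = M(t\lambda+(1-t)\lambda') + (ty+(1-t)y')$. The orthogonal complement is a subspace, so this is again a decomposition of the form \eqref{eq:linear_combination}, and by uniqueness it is the decomposition.

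Next I rewrite the region. By \cref{def:region}, with $a_i = \lambda_i(x)$,
\[
\mathcal{R}_c = \bigcap_{i \neq c} \left\{ x \in \mathbb{R}^d ~\middle|~ \lambda_i(x) - \lambda_c(x) \leq 0 \right\},
\]
since $c \in \arg\max_i \lambda_i(x)$ holds exactly when $\lambda_c(x) \ge \lambda_i(x)$ for all $i$. Each function $x \mapsto \lambda_i(x)-\lambda_c(x) = \langle x, w_{ic}\rangle$, with $w_{ic} = (M^+)^\top (e_i - e_c)$, is a linear form. When $w_{ic} \neq 0$, the corresponding set is a closed halfspace in the sense of \cref{def:halfspace}, with $b=0$. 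When $w_{ic}=0$ the set is all of $\mathbb{R}^d$; this cannot actually happen, because $M^+$ has full row rank $m$, so $(M^+)^\top$ is injective and $w_{ic} \neq 0$ for $i\neq c$.

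Hence $\mathcal{R}_c$ is a finite intersection of convex sets, so it is convex. In fact it is a polyhedral cone (\cref{def:cone}), which anticipates the main theorem. The elementary alternative would be to take $x,x' \in \mathcal{R}_c$ and $t\in[0,1]$. The inequalities $\lambda_c(x)\ge\lambda_i(x)$ and $\lambda_c(x')\ge\lambda_i(x')$ combine with weights $t,1-t$, and linearity of $\lambda$ then gives $\lambda_c(tx+(1-t)x')\ge \lambda_i(tx+(1-t)x')$. I would present this short version and note the halfspace form as a remark.

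The only genuine obstacle is establishing that $\lambda$ is linear, that is, that the abundances are well defined for every $x\in\mathbb{R}^d$, including points off the simplex and outside the span. This is exactly where linear independence of the endmembers and the orthogonal-complement convention in \eqref{eq:linear_combination} are needed. I would state that hypothesis explicitly at the start of the proof.
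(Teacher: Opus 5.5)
Your proof is correct. The short version you say you would present is exactly the paper's argument. Uniqueness of the decomposition \eqref{eq:linear_combination} forces the coefficient vector of $\rho x+(1-\rho)x'$ to be $\rho\lambda+(1-\rho)\lambda'$, and the inequalities $\lambda_c\ge\lambda_i$ survive the convex combination. You also justify the one step the paper only asserts, namely that ``the $\arg\max$ is preserved under convex combinations.''

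Your main route is genuinely different and stronger. Making the coefficient map explicit as $x\mapsto M^+x$ gives $\mathcal{R}_c=\bigcap_{i\neq c}\{x \mid \langle x,(M^+)^\top(e_i-e_c)\rangle\le 0\}$ with nonzero normals. So each region is visibly a closed polyhedral cone. This delivers the content of \cref{them:main} directly:
\begin{itemize}
\item the regions cover $\mathbb{R}^d$;
\item overlaps are confined to the measure-zero hyperplanes $\{\lambda_i=\lambda_c\}$;
\item every boundary hyperplane passes through the origin.
\end{itemize}
You get all of this without \cref{lem:partition}, the Hyperplane Separation Theorem, or the ``up to measure zero'' bookkeeping in the paper's proof. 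It also identifies the separating hyperplanes concretely, as kernels of differences of rows of $M^+$. That is precisely the cone--endmember relation the paper's conclusion lists as future work.

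The paper's segment argument is more economical if convexity is all one wants. It needs only uniqueness of the decomposition, not the explicit pseudo-inverse. But it stops at convexity, and polyhedrality then has to come from the separate, more delicate machinery of \cref{lem:partition}. Given that, the halfspace form deserves to be more than a remark.
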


By the Hyperplane Separation Theorem (th. 7.3 in~\cite{gallier2011geometric}), any two nonempty disjoint convex regions of a Euclidean space can be separated by a hyperplane. This yields our second lemma (see~\cite{leon2018spaces}):

\begin{lemma}[Polyhedral Property of Convex Partitions]
\label{lem:partition}
Any convex finite ($m$-)partition of a Euclidean space results in $m$ convex polyhedral regions.
\end{lemma}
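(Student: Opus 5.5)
Let $\mathcal{P}=\{P_1,\ldots,P_m\}$ be a partition of $\mathbb{R}^d$ into $m$ convex sets, assumed nonempty (empty parts can be dropped). The plan is to show that each $P_c$ coincides, up to its boundary, with an intersection of at most $m-1$ closed halfspaces. Closed halfspaces cannot recover a non-closed set exactly, so we read \cref{lem:partition} as a statement about the closures $\overline{P_c}$, or equivalently about regions that are polyhedra up to boundary points. This is the sense needed for the dominant-material regions, which are closed by \cref{def:region} anyway.

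\textbf{Step 1: one separating halfspace per pair.} For each pair $c\neq c'$, the sets $P_c$ and $P_{c'}$ are nonempty, disjoint and convex. By the Hyperplane Separation Theorem we choose $w_{cc'}\neq 0$ and $b_{cc'}$ with
\[
\langle x,w_{cc'}\rangle\le b_{cc'} \text{ on } P_c, \qquad \langle x,w_{cc'}\rangle\ge b_{cc'} \text{ on } P_{c'}.
\]
We take $w_{c'c}=-w_{cc'}$ and $b_{c'c}=-b_{cc'}$. Let $H_{cc'}$ be the corresponding closed halfspace containing $P_c$, and set
\[
Q_c:=\bigcap_{c'\neq c}H_{cc'} .
\]
Then $Q_c$ is a polyhedron in the sense of \cref{def:polyhedron}, and $\overline{P_c}\subseteq Q_c$ because $Q_c$ is closed.

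\textbf{Step 2: reverse inclusion.} We must show $Q_c\subseteq\overline{P_c}$, and this is the main obstacle. First we show $\operatorname{int}Q_c\cap P_{c'}=\emptyset$ for every $c'\neq c$. Indeed $P_{c'}\subseteq H_{c'c}$, and $H_{c'c}$ meets $H_{cc'}$ only on the hyperplane $\{\langle x,w_{cc'}\rangle=b_{cc'}\}$, which contains no interior point of $Q_c$. Since the $P_i$ cover $\mathbb{R}^d$, it follows that $\operatorname{int}Q_c\subseteq P_c$.

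If $\operatorname{int}Q_c\neq\emptyset$, then $Q_c$ is a closed convex set with nonempty interior, so $Q_c=\overline{\operatorname{int}Q_c}\subseteq\overline{P_c}$. This gives $\overline{P_c}=Q_c$.

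\textbf{Step 3: degenerate case.} It remains to handle $\operatorname{int}Q_c=\emptyset$. The fallback is to first treat parts with nonempty interior. A finite union of the lower-dimensional parts is nowhere dense: each is a convex set with empty interior, hence contained in a hyperplane. So the closures of the full-dimensional parts already cover $\mathbb{R}^d$, and Step 2 applies to each of them. A thin part is then contained in a hyperplane and in the relevant closures, and is itself a (degenerate) polyhedron, namely an intersection of finitely many halfspaces cut out within that hyperplane.

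In practice we would state the lemma for partitions whose parts have nonempty interior, which is the case in \cref{def:region} when the endmembers are independent. We would then remark that the inequality directions on shared boundaries only determine the tie-breaking on hyperplanes.
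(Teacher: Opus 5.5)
Your Steps~1--2 follow the same route as the paper, whose proof is only the pairwise application of the Hyperplane Separation Theorem plus a citation. You go further by actually proving the reverse inclusion $Q_c\subseteq\overline{P_c}$ through $\operatorname{int}Q_c\subseteq P_c$, which separation alone does not give, and reading the lemma through closures is the right interpretation. The gap is the last sentence of Step~3. The facts you gather there do not imply that the closure of a thin part is a polyhedron: it lies in a hyperplane and in closures of full-dimensional parts, and these closures cover $\mathbb{R}^d$. A closed disk in the plane $\{z=0\}$ of $\mathbb{R}^3$ has all these properties with respect to the parts $\{z>0\}$ and $\{z<0\}$, yet it is not a polyhedron. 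What actually excludes it is that the rest of its affine hull would have to be partitioned into finitely many convex sets disjoint from it, and Step~3 never uses this.

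The fix is to rerun your own argument inside the affine hull. For any part, let $A=\operatorname{aff}P_c$. The traces $P_i\cap A$ are convex, pairwise disjoint and cover $A\cong\mathbb{R}^k$, and $P_c$ has nonempty interior relative to $A$. Steps~1--2 carried out in $A$ with relative interiors then give $\overline{P_c}=A\cap H_1\cap\cdots\cap H_r$ for finitely many closed halfspaces $H_j$ of $\mathbb{R}^d$. Since $A$ is itself an intersection of finitely many closed halfspaces, $\overline{P_c}$ is a polyhedron. This treats thin and full-dimensional parts uniformly, so the nowhere-density detour is not needed.

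Your closing suggestion to assume nonempty interiors is moreover essential in the paper's own setting. There partitions are taken up to Lebesgue-null overlaps, as in the proof of \cref{them:main}. In that sense the closed disk together with $\{z\ge 0\}$ and $\{z\le 0\}$ is a convex $3$-partition of $\mathbb{R}^3$ with a non-polyhedral part. In that setting one separates the interiors and uses $\mathcal{R}_c\subseteq\overline{\operatorname{int}\mathcal{R}_c}$. The dominant-material regions do have nonempty interior, so the lemma still applies to them.
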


Up to a set of Lebesgue measure zero, namely where the $\arg\max$ in \eqref{eq:region} is not unique, the dominant-material regions form an $m$-partition of $\mathbb{R}^d$, which, by \cref{lem:convexity}, is convex. By \cref{lem:partition}, these regions are polyhedral. 
Moreover, the origin $0_d$ has $0$ for coefficient on every endmember and thus belongs to every dominant-material region, hence to all separation hyperplanes. We finally obtain:

\begin{theorem}[Polyhedral-Cone Partition Theorem]
\label{them:main}
Under the LMM, the dominant-material regions are $m$ polyhedral cones forming an $m$-partition of $\mathbb{R}^d$.
\end{theorem}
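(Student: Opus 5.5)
The plan is to prove the statement directly from the linear structure of the coefficient map, rather than going through \cref{lem:partition}; the direct route also identifies the separating hyperplanes explicitly.

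Assume, as in the text preceding \cref{lem:convexity}, that the endmembers $M_{:1},\ldots,M_{:m}$ are linearly independent. Let $P$ be the orthogonal projection onto $S=\text{span}\{M_{:i}\}_{i=1}^m$. By \eqref{eq:linear_combination}, the map $x\mapsto \lambda(x)=(\lambda_1(x),\ldots,\lambda_m(x))$ is linear: $\lambda(x)=M^{+}x$ with $M^{+}=(M^\top M)^{-1}M^\top$. This is linear because $M^{+}$ annihilates $S^\perp$ and inverts $M$ on $S$. Write $u_i\in\mathbb{R}^d$ for the $i$-th row of $M^{+}$, so that $\lambda_i(x)=\langle x,u_i\rangle$.

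\textbf{Step 1 (explicit description).} For each class $c$, unfolding \cref{def:region} gives
\begin{equation*}
\mathcal{R}_c=\bigcap_{i\neq c}\left\{x\in\mathbb{R}^d ~\middle|~ \langle x, u_i-u_c\rangle \le 0\right\}.
\end{equation*}
Each $u_i-u_c$ is nonzero, since the rows of $M^{+}$ are linearly independent ($M^{+}M=I_m$). Each set in the intersection is therefore a closed halfspace in the sense of \cref{def:halfspace}, with $b=0$. Hence $\mathcal{R}_c$ is an intersection of $m-1$ halfspaces whose boundary hyperplanes pass through $0_d$. That makes it a polyhedral cone by \cref{def:cone}, and it recovers \cref{lem:convexity} as a by-product.

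\textbf{Step 2 (covering).} Every $x$ has some index attaining $\max_i\lambda_i(x)$, so $\bigcup_c\mathcal{R}_c=\mathbb{R}^d$.

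\textbf{Step 3 (disjointness up to null sets).} For $c\neq c'$, the intersection $\mathcal{R}_c\cap\mathcal{R}_{c'}$ lies in the hyperplane $\{x ~|~ \langle x,u_c-u_{c'}\rangle=0\}$. This is a proper linear hyperplane because $u_c\neq u_{c'}$, so it has Lebesgue measure zero. The regions therefore overlap only on boundaries, so they form an $m$-partition in the sense used before the theorem, namely up to a null set or, equivalently, their interiors are pairwise disjoint. I would also check that each $\mathcal{R}_c$ has nonempty interior: the point $x=M_{:c}$ gives $\lambda=e_c$, which satisfies every defining inequality strictly, so each cone is full-dimensional and no class is empty.

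The main obstacle is interpreting ``partition''. It has to be read modulo the measure-zero tie set, or made exact by a tie-breaking rule such as assigning ties to the smallest index. In the latter case the regions become cones that are no longer closed, and I would remark that the closures are the polyhedral cones above. A secondary point is that linear independence of the endmembers is needed for the coefficients $\lambda_i$ to be well defined; I would state it explicitly as the standing LMM assumption. Noise $E$ is ignored, since the regions are defined through the noiseless coefficients.
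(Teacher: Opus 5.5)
Your proof is correct, and it takes a different route from the paper's. The paper argues in three abstract steps:
\begin{enumerate}
\item It proves each $\mathcal{R}_c$ is convex (\cref{lem:convexity}), because the $\arg\max$ is preserved under convex combinations of coefficient vectors.
\item It appeals to \cref{lem:partition}, namely that a finite convex partition of $\mathbb{R}^d$ consists of polyhedra bounded by pairwise separating hyperplanes. This is justified only by citing the Hyperplane Separation Theorem.
\item It notes that $0_d$ lies in every region, hence on every separating hyperplane, which makes the polyhedra cones.
\end{enumerate}

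You bypass both lemmas. Using the linearity of $x\mapsto\lambda(x)=M^{+}x$, you write $\mathcal{R}_c=\bigcap_{i\neq c}\{x \mid \langle x,u_i-u_c\rangle\le 0\}$ directly, so polyhedrality, the cone property and convexity come at once.

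What your route buys is precision:
\begin{itemize}
\item The separating hyperplanes are identified explicitly as the tie sets $\{\lambda_c=\lambda_{c'}\}$ with normals $u_c-u_{c'}$. The paper asserts this but does not derive it.
\item The ``up to a null set'' clause is rigorously justified by your check that $u_c\neq u_{c'}$, which follows from $M^{+}M=I_m$.
\item Full-dimensionality of each cone follows from $\lambda(M_{:c})=e_c$.
\item You avoid the informal step of applying \cref{lem:partition} to closed sets that overlap on their boundaries, and so only partition $\mathbb{R}^d$ up to measure zero.
\end{itemize}

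The paper's route is more modular: it would apply to any convex partition with a common point, without using the specific linear form of the abundances. In the LMM setting, however, your argument is shorter and self-contained. Moreover, the explicit normals $u_c-u_{c'}$, built from rows of $M^{+}$, are exactly the kind of cone--endmember relation the paper's conclusion proposes to exploit.
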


A 2D cross-section of such a partition in $\mathbb{R}^3$ is illustrated in \Cref{fig:standard_chain} (step 3). This theorem motivates us to start unmixing with the identification of a polyhedral-cone partition of the spectral space that best separates the data labeled by semantic segmentation. We develop our method hereinafter.

\subsection{Method Overview}

Our method inverts the standard pipeline (\cref{fig:standard_chain}), with segmentation as entry point and \Cref{them:main} as the geometric backbone. 
It comprises three main stages:
\begin{enumerate}
    \item \textbf{Polyhedral-cone partitioning:} the pixels are classified by a clustering or unsupervised segmentation method, yielding a classification map $\hat{C}$, and a polyhedral-cone partition of the spectral space is fitted over the labeled data.
    \item \textbf{Initial abundance estimate:} a first abundance estimate $\hat{A}'$ is computed from the space partition using the signed distances from the data to the polyhedral-cone regions.
    \item \textbf{Endmember and abundance recovery:} from $\hat{A}$, endmembers $\hat{M}$ are extracted and final abundances $\hat{A}$ are recovered by matrix pseudo-inversion.
\end{enumerate}
These stages are detailed in the next three subsections.

\subsection{Polyhedral-Cone Partitioning}

\begin{figure*}[t]
    \centering
    \includegraphics[width=1.00\linewidth]{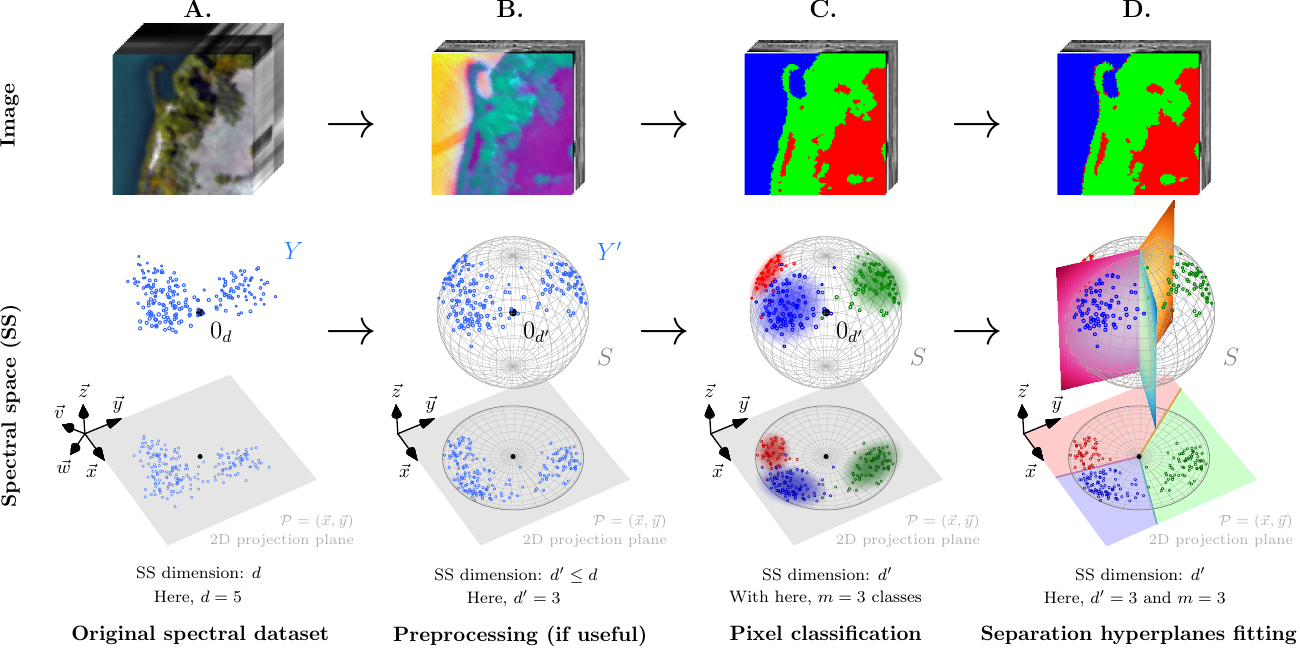}
    \caption{Processing chain to determine a polyhedral-cone partition of the spectral space. From left to right: observed data (A.) are pre-processed (B.) and classified into $m$ classes (C.). Separation hyperplanes are then computed (D.), inducing $k \geq m$ polyhedral regions. Top row: \textit{image} domain; bottom row: \textit{spectral space} representation (3D view + 2D projection plane $P(\vec{x}, \vec{y})$ below, which allows a better visualization of the data and the three polyhedral regions). Each RGB color is associated with one class.}
    \label{fig:partition}
\end{figure*}

The first stage aims to find a polyhedral-cone partition of the spectral space that best separates the data labeled by any chosen (unsupervised) semantic segmentation method. 
It is composed of three sub-steps, illustrated in \Cref{fig:partition}:
\begin{enumerate}
    \item (A. $\rightarrow$ B.) \textbf{Preprocessing}. A preliminary step optionally maps $Y$ to $Y'$. We may (i) normalize spectral luminance by projecting the data onto the unit sphere $S$ of $\mathbb{R}^{d}$ to mitigate shadows and illumination effects~\cite{zouaoui2023entropic}, and (ii) reduce space dimensionality to $d' < d$ via PCA~\cite{keshava2002spectral,rasti2024image}.
    \item (B. $\rightarrow$ C.) \textbf{Semantic segmentation}. A hyperspectral semantic segmentation method (via pixel classification or clustering)~\cite{li2025hyperspectral} is applied to $Y'$ to obtain a classification map $\hat{C}$ into $m$ classes, where $m$ is either fixed or estimated~\cite{chang2004estimation,bioucas2012hyperspectral,mclachlan2014number}. For instance, Gaussian Mixture Models (GMM)~\cite{rasmussen1999infinite,yang2012robust} provide a flexible clustering tool for non-homogenous spectral data~\cite{li2013hyperspectral}.
    \item (C. $\rightarrow$ D.) \textbf{Separation hyperplanes}. Leveraging \Cref{them:main} and the Hyperplane Separation Theorem~\cite{rudinfunctional}, linear hyperplanes that best pairwise separate the $m$ class clusters are estimated. This can be done via an unbiased linear support vector machine (SVM)~\cite{pedregosa2011scikit}.
\end{enumerate}

There are $\binom{m}{2}=\frac{1}{2}m(m-1)$ pairwise separation hyperplanes, which induce a partition of the spectral space into $k \geq m$ nonempty polyhedral cones, where $k$ depends on the arrangement of hyperplanes. In practice, we retain only the $m$ largest regions that contain the most data. They serve as approximations of the true dominant-material regions. 

\subsection{Initial Abundance Estimate}

\begin{figure*}[t]
    \centering
    \includegraphics[width=1.00\linewidth]{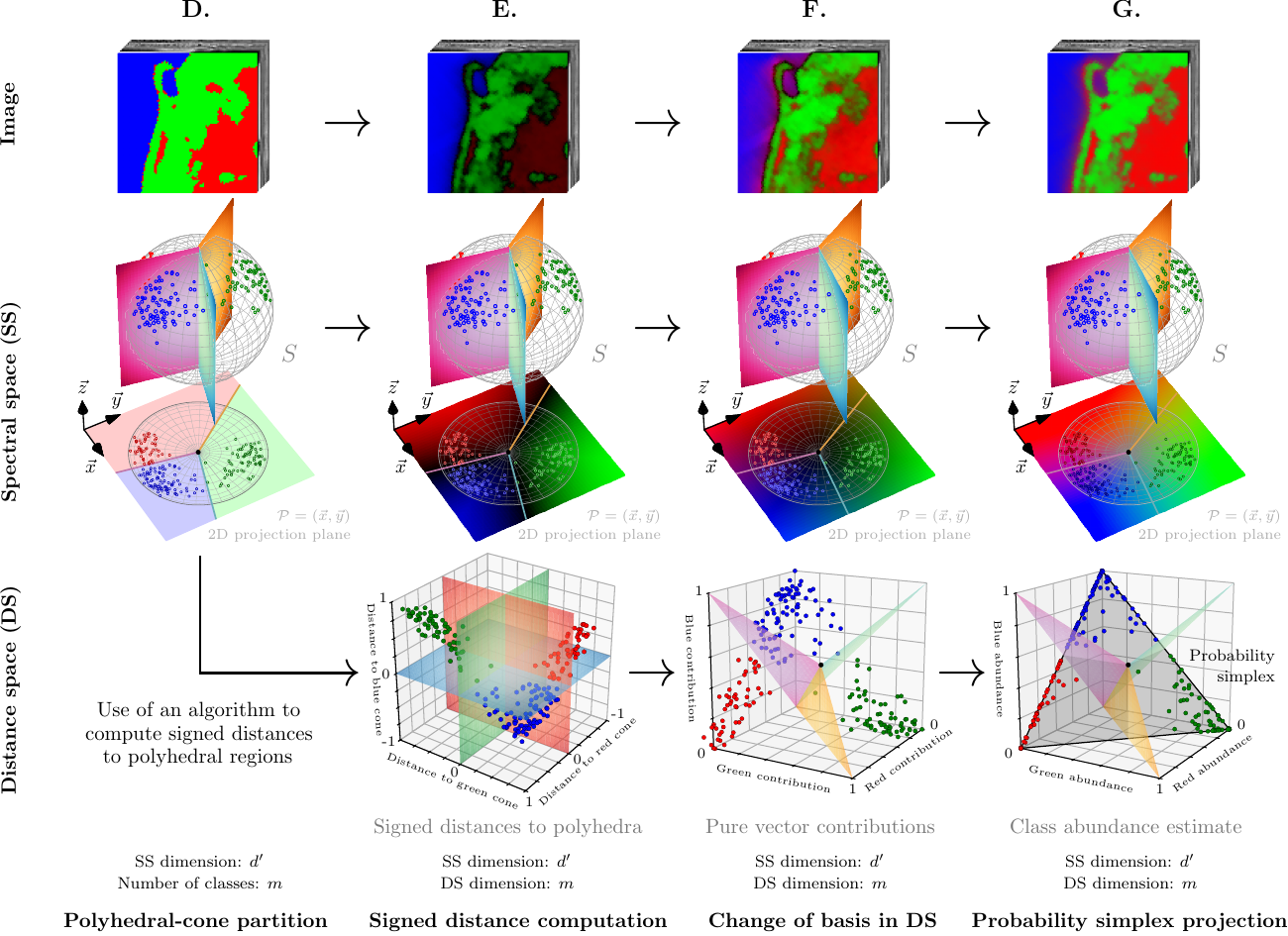}
    \caption{Processing chain for computing an abundance estimate. From left to right: given the spectral partition (D.), signed distances to each cone are computed (E.); a change of basis is then applied in the distance space (F.), and the resulting vectors are projected onto the probability simplex (G.). Top: \textit{image}; middle: \textit{spectral space}; bottom: \textit{distance space}. The 2D RGB projection plane $P(\vec{x}, \vec{y})$ shows: polyhedral regions (D.), negative distances in each region (E.), transformed distance vectors (F.) and abundance vectors (G.). Each RGB color is associated with one class. Yellow, magenta and cyan represent separation planes between R-G, R-B and G-B pairs of classes, respectively.}
    \label{fig:abundance}
\end{figure*}

The second stage computes an initial abundance estimate $\hat{A}'$ from the partition. It consists in three sub-steps (see \cref{fig:abundance}):
\begin{enumerate}
    \item (D. $\rightarrow$ E.) \textbf{Signed distance computation}. For each point $y_i' \in \mathbb{R}^{d'}$, we compute its signed distance to every class region $\mathcal{R}_c$, obtaining an $m$-dimensional distance vector. For $x \in \mathbb{R}^{d'}$ and $S \subseteq \mathbb{R}^{d'}$, the signed distance $d_s$ is
    \begin{equation}
        d_s(x,S) := \left\{\begin{array}{cl}
            d(x,S) & \text{ if } x \notin S \\
            - d(x,S^\complement) & \text{ otherwise}
        \end{array}\right. ,
    \end{equation}
    where $d(x,S) := \inf_{y \in S}\|x-y\|$ is the Euclidean distance. 
    When $S$ is a polyhedron $\mathcal{R}$, $d(x,\mathcal{R})$ is obtained by solving a nearest-point problem in a polyhedral set~\cite{liu2012nearest,yu2025projections}. An exact algorithm for the signed distance $d_s$ is given in~\cite{bottenmuller2025euclidean}. 
    Stacking these vectors yields a distance matrix $D \in \mathbb{R}^{m \times n}$ in the \textit{distance space} $\mathbb{R}^m$. Each coordinate stands for one cone (see \cref{fig:abundance}, E.).

    \item (E. $\rightarrow$ F.) \textbf{Change of basis}. We then perform a change of basis in the distance space so that the new coordinates reflect contributions with respect to a set of \textit{reference distance vectors}. They can be chosen, for instance, as the vectors of smallest signed distances in each class (they are the closest to the regions core). 
    Let $B \in \mathbb{R}^{m \times m}$ stack these $m$ reference vectors as columns. We transform $D' := B^{-1} D$. 
    This mitigates scale disparities between regions (some cones may be much larger, yielding larger raw distances) and aligns the transformed vectors with the probability simplex, preparing for projection.
    
    \item (F. $\rightarrow$ G.) \textbf{Simplex projection}. Finally, we project the columns of $D'$ onto the probability simplex~\cite{patterson2013stochastic,wang2013projection} in $\mathbb{R}^m$. Since the simplex is a polyhedral set, this projection can again be solved via nearest-point algorithms such as~\cite{bottenmuller2025euclidean}. Just before projection, we scale $D'$ by a factor $s > 0$, which controls the \textit{saturation} of the projected data: larger $s$ pushes the points closer to the simplex edges. This is the only hyperparameter of our method.
\end{enumerate}
This stage therefore outputs an initial abundance estimate $\hat{A}' \in \mathbb{R}^{m \times n}$ by projecting the columns of $D'$ onto the probability simplex in the distance space.

\subsection{Endmember and Abundance Recovery}

The final stage recovers from $\hat{A}'$ and $Y$ both endmembers $\hat{M}$ and refined abundances $\hat{A}$ via matrix pseudo-inversion, in a supervised unmixing fashion~\cite{rasti2024image}. 
It contains two successive sub-steps:
\begin{enumerate}
    \item \textbf{Endmember recovery}. The endemember matrix $\hat{M}$ is first estimated from $\hat{A}'$ by $\hat{M} := Y \hat{A}'^+_\lambda$, where $\hat{A}'^+_\lambda := \hat{A}'^\top (\hat{A}' \hat{A}'^\top + \lambda I_m)^{-1}$ is the regularized Moore–Penrose pseudo-inverse of $\hat{A}'$, with $\lambda$ the Tikhonov regularization parameter~\cite{day2011inversion}. $\hat{M}$ is the analytic solution to $\min_M \|Y - M\hat{A}'\|_F^2 + \lambda \|M\|_F^2$.
    
    \item \textbf{Abundance recovery}. The final abundance matrix $\hat{A}$ is then estimated from $\hat{M}$ by $\hat{A} := \hat{M}^+_\lambda Y$, where $\hat{M}^+_\lambda := (\hat{M}^\top \hat{M} + \lambda I_m)^{-1} \hat{M}^\top$ is the regularized Moore–Penrose pseudo-inverse of $\hat{M}$. $\hat{A}$ is the analytic solution to $\min_A \|Y - \hat{M}A\|_F^2 + \lambda \|A\|_F^2$.
\end{enumerate}
The regularization parameter $\lambda \geq 0$ is set to 0 when the matrices are well-conditioned. 
Note that the considered data are the initial observations $Y$, so that $\hat{M}$ is estimated in the original spectral space $\mathbb{R}^d$. 
The pair $(\hat{M},\hat{A})$ constitutes the final output of our proposed unmixing method.

\section{Experiments}
\label{sec:experiments}

We evaluate our unmixing method in the blind setting on three real datasets, and compare the results with eight state-of-the-art unmixing algorithms. The visualizations of the results (estimated abundance maps and endmembers for unmixing experiments, and classification maps for robustness evaluation to different clustering methods) are provided in \Cref{app:visualizations}.

\subsection{Hyperspectral Datasets}

We consider three widely-used real hyperspectral datasets, all publicly available with their ground truths~\cite{zouaoui2023entropic,zhu2017hyperspectral}:
\begin{itemize}
    \item \textbf{Samson}~\cite{zhu2014spectral}. The Samson dataset is a 95$\times$95 pixels sub-image of a larger dataset. It uses a total of 156 spectral bands spanning from 401 to 889 nm. Three main materials are observed in the scene: \#1 Soil, \#2 Tree and \#3 Water.
    
    \item \textbf{Jasper Ridge}~\cite{zhu2014structured}. Jasper Ridge is a 100$\times$100 pixels sub-image of a larger dataset captured by the AVIRIS sensor~\cite{green1998imaging}. It uses 224 spectral bands spanning from 380 to 2500 nm. 26 noisy bands are removed before processing, leaving a total of 198 bands. Four materials are observed in the scene: \#1 Soil, \#2 Tree, \#3 Water and \#4 Road.
    
    \item \textbf{Urban-6}~\cite{qian2011hyperspectral}. The Urban-6 dataset is a 307$\times$307 pixels image captured by the HYDICE sensor~\cite{rickard1993hydice}. It uses 210 spectral bands spanning from 400 to 2500 nm. 48 of them are removed as a pre-processing step due to atmospheric noise effects, leaving a total of 162 bands. The dataset contains six materials: \#1 Asphalt, \#2 Grass, \#3 Tree, \#4 Roof, \#5 Metal and \#6 Dirt.
\end{itemize}

\subsection{Benchmark Methods} 

We compare to eight strong baselines across three families:
\begin{itemize}
    \item \textbf{Geometrical}: one pure-pixel (PP) algorithm, VCA~\cite{nascimento2005vertex}; and one minimum-volume (MV) algorithm, SISAL~\cite{bioucas2009variable}, among the most effective~\cite{rasti2024image}.
    \item \textbf{NMF}: a quadratic minimum volume one, NMF-QMV~\cite{zhuang2019regularization}, which is a strong benchmark NMF algorithm~\cite{rasti2024image}.
    \item \textbf{Deep learning-based}: three autoencoders, (i) Endnet~\cite{ozkan2018endnet}, which includes a loss function with several terms, (ii) MiSiCNet~\cite{rasti2022misicnet}, which incorporates geometrical information, and (iii) a recent cascaded spatial-spectral Mamba-based model (UNMamba)~\cite{chen2025unmamba}; and one deep unrolling model, ADMMNet~\cite{zhou2021admm}, which embed the ADMM~\cite{boyd2011distributed} solver into a trainable architecture. We also test Entropic Descent Archetypal Analysis (EDAA)~\cite{zouaoui2023entropic} which uses an entropic descent algorithm to solve the Archetypal Analysis problem~\cite{rasti2024image}.
\end{itemize}
MiSiCNet has been shown to outperform prior deep and non-deep methods~\cite{rasti2022misicnet,zouaoui2023entropic}, such as CyCUNet~\cite{gao2021cycu}, uDAS~\cite{qu2018udas} and UnDIP~\cite{rasti2021undip}. EDAA outperforms MiSiCNet~\cite{zouaoui2023entropic}.
ADMMNet improves over uDAS, UnDIP, EGU-Net-pw~\cite{hong2021endmember} and MNN-BU-2~\cite{qian2020spectral}. 
These make these methods competitive baselines.

\subsection{Evaluation Metrics}

To evaluate the unmixing quality, we consider two standard metrics~\cite{zhu2014spectral,ozkan2018endnet}: (i) the spectral angle distance (SAD) for endmembers, and (ii) the root mean square error (RMSE) for abundances. 
Let $i \in \{1,\ldots,m\}$ denote the index of the material among the observed ones to evaluate.

The SAD between the predicted endmember $\hat{M}_{:i} \in \mathbb{R}^d$ and the associated ground-truth $M_{:i} \in \mathbb{R}^d$ is defined as
\begin{equation}
    \text{SAD}(\hat{M}_{:i},M_{:i}) = \arccos{\left(\frac{\langle \hat{M}_{:i} , M_{:i} \rangle}{\|\hat{M}_{:i}\| \|M_{:i}\|}\right)} .
\end{equation}
It measures the angle (radians) between two vectors in $\mathbb{R}^d$. 

The RMSE between the predicted abundance $\hat{A}_{i:} \in \mathbb{R}^n$ and its associated ground-truth $A_{i:} \in \mathbb{R}^n$ is defined as
\begin{equation}
    \text{RMSE}(\hat{A}_{i:}, A_{i:}) = \sqrt{\frac{1}{n} \sum_{j=1}^n (\hat{A}_{ij} - A_{ij})^2} .
\end{equation}
It quantifies the mean per-pixel error. For both metrics, lower is better. We also report averages (Avg.) over materials for every experiment and every dataset.


\definecolor{colorEX1}{rgb}{0.40, 0.40, 0.80}
\definecolor{colorEX2}{rgb}{0.40, 0.00, 0.00}

\begin{table*}[t]
\caption{The SADs and RMSEs (mean$\pm$std over 10 runs per experiment) on the Samson dataset. The best results are shown in bold; the second best are underlined.}
\label{tab:results_samson}
  \centering
  \small
  \setlength{\tabcolsep}{8pt}
  \renewcommand{\arraystretch}{1.2}

  {\scriptsize
  \resizebox{\textwidth}{!}{%
  \begin{tabular}{|
      p{1.25cm}||
      >{\centering\arraybackslash}p{1.5cm}
      >{\centering\arraybackslash}p{1.5cm}
      >{\centering\arraybackslash}p{1.7cm}
      >{\centering\arraybackslash}p{1.6cm}
      >{\centering\arraybackslash}p{1.6cm}
      >{\centering\arraybackslash}p{1.6cm}
      >{\centering\arraybackslash}p{1.6cm}
      >{\centering\arraybackslash}p{1.6cm}
      >{\centering\arraybackslash}p{1.7cm}|
    }
    \hline
    \multirow{2}{*}{Endm.} &
    \multicolumn{9}{c|}{\textbf{Endmember Spectral Angle Distance (SAD)} $(\times 10^{-2})$} \\
    \cline{2-10}
    & \textbf{VCA} & \textbf{SISAL} & \textbf{NMF-QMV} & \textbf{Endnet} & \textbf{MiSiCNet} & \textbf{ADMMNet} & \textbf{EDAA} & \textbf{UNMamba} & \textbf{Proposed} \\
    \hline
    \#1 Soil  & 13.26 $\pm$ 9.3 & 32.05 $\pm$ 0.0 &  8.91 $\pm$ 0.0 & \textbf{1.42 $\pm$ 0.3} & 1.91 $\pm$ 0.2 &  2.24 $\pm$ 0.0 & 2.87 $\pm$ 0.0 & 6.13 $\pm$ 0.9 & \underline{1.85 $\pm$ 0.2} \\
    \#2 Tree  &  4.39 $\pm$ 2.7 &  \textbf{2.96 $\pm$ 0.0} &  8.44 $\pm$ 0.0 & 3.48 $\pm$ 0.3 & 5.76 $\pm$ 0.0 &  5.32 $\pm$ 0.0 & 3.45 $\pm$ 0.0 & 4.48 $\pm$ 0.4 & \underline{3.41 $\pm$ 0.1} \\
    \#3 Water & 11.96 $\pm$ 0.7 &  6.89 $\pm$ 0.0 & 18.87 $\pm$ 0.0 & 4.06 $\pm$ 0.3 & 9.45 $\pm$ 0.1 & 12.48 $\pm$ 0.0 & \textbf{2.28 $\pm$ 0.0} & 3.38 $\pm$ 0.1 & \underline{2.93 $\pm$ 0.2} \\
    \hline
    Avg. & 9.87 $\pm$ 4.1 & 13.97 $\pm$ 0.0 & 12.07 $\pm$ 0.0 & 2.99 $\pm$ 0.2 & 5.71 $\pm$ 0.1 & 6.68 $\pm$ 0.0 & \underline{2.87 $\pm$ 0.0} & 4.66 $\pm$ 0.2 & \textbf{2.73 $\pm$ 0.1} \\
    \hline
  \end{tabular}}}
  
  \,

  {\scriptsize
  \resizebox{\textwidth}{!}{%
  \begin{tabular}{|
      p{1.25cm}||
      >{\centering\arraybackslash}p{1.5cm}
      >{\centering\arraybackslash}p{1.5cm}
      >{\centering\arraybackslash}p{1.7cm}
      >{\centering\arraybackslash}p{1.6cm}
      >{\centering\arraybackslash}p{1.6cm}
      >{\centering\arraybackslash}p{1.6cm}
      >{\centering\arraybackslash}p{1.6cm}
      >{\centering\arraybackslash}p{1.6cm}
      >{\centering\arraybackslash}p{1.7cm}|
    }
    \hline
    \multirow{2}{*}{Endm.} &
    \multicolumn{9}{c|}{\textbf{Abundance Root Mean Square Error (RMSE)} $(\times 10^{-2})$} \\
    \cline{2-10}
    & \textbf{VCA} & \textbf{SISAL} & \textbf{NMF-QMV} & \textbf{Endnet} & \textbf{MiSiCNet} & \textbf{ADMMNet} & \textbf{EDAA} & \textbf{UNMamba} & \textbf{Proposed} \\
    \hline
    \#1 Soil  & 18.14 $\pm$ 9.2 & 16.26 $\pm$ 0.0 & 13.67 $\pm$ 0.0 & 15.99 $\pm$ 0.6 & 6.76 $\pm$ 0.3 & 10.04 $\pm$ 0.0 & \underline{5.73 $\pm$ 0.0} & 12.55 $\pm$ 0.6 & \textbf{4.18 $\pm$ 0.3} \\
    \#2 Tree  & 15.45 $\pm$ 8.7 &  7.34 $\pm$ 0.0 &  8.40 $\pm$ 0.0 & 20.22 $\pm$ 0.4 & 5.42 $\pm$ 0.3 &  5.88 $\pm$ 0.0 & \underline{3.76 $\pm$ 0.0} & 10.55 $\pm$ 0.7 & \textbf{3.05 $\pm$ 0.2} \\
    \#3 Water & 10.20 $\pm$ 5.7 & 14.89 $\pm$ 0.0 & 11.61 $\pm$ 0.0 & 27.36 $\pm$ 0.4 & 3.59 $\pm$ 0.1 &  8.10 $\pm$ 0.0 & \textbf{2.59 $\pm$ 0.0} &  \underline{2.61 $\pm$ 0.1} & 3.33 $\pm$ 0.2 \\
    \hline
    Avg. & 14.59 $\pm$ 7.9 & 12.83 $\pm$ 0.0 & 11.23 $\pm$ 0.0 & 21.19 $\pm$ 0.4 & 5.26 $\pm$ 0.2 & 8.01 $\pm$ 0.0 & \underline{4.03 $\pm$ 0.0} &  8.57 $\pm$ 0.4 & \textbf{3.52 $\pm$ 0.2} \\
    \hline
    \hline
    Time (s) & 0.03 & 0.13 & 6.38 & 328.51 & 104.09 & 46.75 & 19.72 & 20.65 & \textcolor{colorEX1}{0.38} $+$ \textcolor{colorEX2}{0.06} \\
    \hline
  \end{tabular}}}
\end{table*}

\begin{table*}[t]
\caption{The SADs and RMSEs (mean$\pm$std over 10 runs per experiment) on the Jasper Ridge dataset. The best results are shown in bold; the second best are underlined.}
\label{tab:results_jasper_ridge}
  \centering
  \small
  \setlength{\tabcolsep}{8pt}
  \renewcommand{\arraystretch}{1.2}

  {\scriptsize
  \resizebox{\textwidth}{!}{%
  \begin{tabular}{|
      p{1.25cm}||
      >{\centering\arraybackslash}p{1.5cm}
      >{\centering\arraybackslash}p{1.5cm}
      >{\centering\arraybackslash}p{1.7cm}
      >{\centering\arraybackslash}p{1.6cm}
      >{\centering\arraybackslash}p{1.6cm}
      >{\centering\arraybackslash}p{1.6cm}
      >{\centering\arraybackslash}p{1.6cm}
      >{\centering\arraybackslash}p{1.6cm}
      >{\centering\arraybackslash}p{1.7cm}|
    }
    \hline
    \multirow{2}{*}{Endm.} &
    \multicolumn{9}{c|}{\textbf{Endmember Spectral Angle Distance (SAD)} $(\times 10^{-2})$} \\
    \cline{2-10}
    & \textbf{VCA} & \textbf{SISAL} & \textbf{NMF-QMV} & \textbf{Endnet} & \textbf{MiSiCNet} & \textbf{ADMMNet} & \textbf{EDAA} & \textbf{UNMamba} & \textbf{Proposed} \\
    \hline
    \#1 Soil  & 37.39 $\pm$ 9.6 & 16.34 $\pm$ 0.0 & 21.65 $\pm$ 0.0 & 21.54 $\pm$ 2.7 &  7.21 $\pm$ 0.1 & 76.62 $\pm$ 0.0 & \underline{5.71 $\pm$ 0.0} & 14.68 $\pm$ 0.1 & \textbf{1.97 $\pm$ 0.1} \\
    \#2 Tree  & 23.52 $\pm$ 4.1 &  3.64 $\pm$ 0.0 & 25.25 $\pm$ 0.0 & 10.59 $\pm$ 3.1 &  \underline{2.18 $\pm$ 0.0} & 19.36 $\pm$ 0.0 & 6.81 $\pm$ 0.0 &  7.63 $\pm$ 0.0 & \textbf{2.05 $\pm$ 0.1} \\
    \#3 Water & 29.08 $\pm$ 6.6 &  9.71 $\pm$ 0.0 & 25.37 $\pm$ 0.0 &  7.17 $\pm$ 1.7 &  7.14 $\pm$ 0.0 & 24.06 $\pm$ 0.0 & \underline{4.44 $\pm$ 0.0} &  \textbf{3.55 $\pm$ 0.1} & 6.45 $\pm$ 0.2 \\
    \#4 Road  & 52.93 $\pm$ 8.7 & 50.21 $\pm$ 0.0 & 79.69 $\pm$ 0.0 &  9.41 $\pm$ 0.2 & 34.14 $\pm$ 0.0 & 31.33 $\pm$ 0.0 & \textbf{2.82 $\pm$ 0.0} & 10.66 $\pm$ 0.6 & \underline{5.35 $\pm$ 0.4} \\
    \hline
    Avg. & 35.73 $\pm$ 2.7 & 19.97 $\pm$ 0.0 & 37.99 $\pm$ 0.0 & 12.18 $\pm$ 1.9 & 12.67 $\pm$ 0.0 & 37.84 $\pm$ 0.0 & \underline{4.94 $\pm$ 0.0} &  9.13 $\pm$ 0.2 & \textbf{3.96 $\pm$ 0.1} \\
    \hline
  \end{tabular}}}
  
  \,
  
  {\scriptsize
  \resizebox{\textwidth}{!}{%
  \begin{tabular}{|
      p{1.25cm}||
      >{\centering\arraybackslash}p{1.5cm}
      >{\centering\arraybackslash}p{1.5cm}
      >{\centering\arraybackslash}p{1.7cm}
      >{\centering\arraybackslash}p{1.6cm}
      >{\centering\arraybackslash}p{1.6cm}
      >{\centering\arraybackslash}p{1.6cm}
      >{\centering\arraybackslash}p{1.6cm}
      >{\centering\arraybackslash}p{1.6cm}
      >{\centering\arraybackslash}p{1.7cm}|
    }
    \hline
    \multirow{2}{*}{Endm.} &
    \multicolumn{9}{c|}{\textbf{Abundance Root Mean Square Error (RMSE)} $(\times 10^{-2})$} \\
    \cline{2-10}
    & \textbf{VCA} & \textbf{SISAL} & \textbf{NMF-QMV} & \textbf{Endnet} & \textbf{MiSiCNet} & \textbf{ADMMNet} & \textbf{EDAA} & \textbf{UNMamba} & \textbf{Proposed} \\
    \hline
    \#1 Soil  & 22.08 $\pm$ 2.2 & 22.58 $\pm$ 0.0 & 19.97 $\pm$ 0.0 & 26.27 $\pm$ 1.6 & 21.53 $\pm$ 0.0 & 23.85 $\pm$ 0.0 & \underline{6.96 $\pm$ 0.0} & 10.41 $\pm$ 0.1 &  \textbf{5.44 $\pm$ 0.1} \\
    \#2 Tree  & 12.85 $\pm$ 1.7 &  9.37 $\pm$ 0.0 & 14.55 $\pm$ 0.0 & 25.96 $\pm$ 2.8 &  \textbf{3.35 $\pm$ 0.1} & 17.25 $\pm$ 0.0 & \underline{5.63 $\pm$ 0.0} &  6.79 $\pm$ 0.1 & 7.60 $\pm$ 0.1 \\
    \#3 Water & 11.01 $\pm$ 4.5 & 14.20 $\pm$ 0.0 & 19.81 $\pm$ 0.0 & 40.49 $\pm$ 0.8 &  7.09 $\pm$ 0.0 & 19.94 $\pm$ 0.0 & \textbf{5.35 $\pm$ 0.0} &  \underline{6.38 $\pm$ 0.1} &  9.00 $\pm$ 0.1 \\
    \#4 Road  & 31.02 $\pm$ 4.2 & 26.31 $\pm$ 0.0 & 26.13 $\pm$ 0.0 & 22.29 $\pm$ 1.1 & 24.85 $\pm$ 0.0 & 31.25 $\pm$ 0.0 & \underline{8.73 $\pm$ 0.0} & 9.82 $\pm$ 0.4 &  \textbf{6.42 $\pm$ 0.1} \\
    \hline
    Avg. & 19.24 $\pm$ 1.7 & 18.12 $\pm$ 0.0 & 20.12 $\pm$ 0.0 & 28.75 $\pm$ 1.5 & 14.21 $\pm$ 0.0 & 23.07 $\pm$ 0.0 & \textbf{6.67 $\pm$ 0.0} & 8.35 $\pm$ 0.2 & \underline{7.12 $\pm$ 0.1} \\
    \hline
    \hline
    Time (s) & 0.03 & 0.14 & 7.54 & 453.18 & 112.81 & 52.44 & 27.19 & 22.46 & \textcolor{colorEX1}{369.51} $+$ \textcolor{colorEX2}{0.11} \\
    \hline
  \end{tabular}}}
\end{table*}


\subsection{Experimental setup}

Our method is implemented in Python. We use the unbiased linear SVC solver of Scikit-learn~\cite{pedregosa2011scikit}. 
The simplex \textit{saturation} hyperparameter $s$ is set to
\begin{equation*}
    s = \frac{1}{2\,\text{std}(D')} ,
\end{equation*}
as default value, where $D'$ is the distance matrix after the change of basis. 

\textbf{Semantic segmentation}. 
For Samson, we use a GMM~\cite{li2013hyperspectral} as clustering model fitted on a random $25\%$ data sampling. For Jasper Ridge, we use the blind classification method provided in~\cite{xu2021unified}. For Urban-6, we use the EGFSC-AXBW hyperspectral clustering algorithm~\cite{zhang2025elastic}.

\textbf{Sampling for SVM}. 
To improve performance, we fit SVMs on a random $10\%$ to $30\%$ pixel subset of $Y'$. 
This induces variation in unmixing results from one run to another. We thus report mean$\pm$std over 10 runs per experiment.

\textbf{Baselines}. 
We used the publicly-available codes provided in~\cite{zouaoui2023entropic,ozkan2018endnet,chen2025unmamba} and in the HySUPP Python package~\cite{rasti2024image}, with the same setup and hyperparameters as the ones either used or recommended by the authors of the original papers. 


\begin{table*}[t]
\caption{The SADs and RMSEs (mean$\pm$std over 10 runs per experiment) on the Urban-6 dataset. The best results are shown in bold; the second best are underlined.}
\label{tab:results_urban}
  \centering
  \small
  \setlength{\tabcolsep}{8pt}
  \renewcommand{\arraystretch}{1.2}

  {\scriptsize
  \resizebox{\textwidth}{!}{%
  \begin{tabular}{|
      p{1.25cm}||
      >{\centering\arraybackslash}p{1.5cm}
      >{\centering\arraybackslash}p{1.5cm}
      >{\centering\arraybackslash}p{1.7cm}
      >{\centering\arraybackslash}p{1.6cm}
      >{\centering\arraybackslash}p{1.6cm}
      >{\centering\arraybackslash}p{1.6cm}
      >{\centering\arraybackslash}p{1.6cm}
      >{\centering\arraybackslash}p{1.6cm}
      >{\centering\arraybackslash}p{1.7cm}|
    }
    \hline
    \multirow{2}{*}{Endm.} &
    \multicolumn{9}{c|}{\textbf{Endmember Spectral Angle Distance (SAD)} $(\times 10^{-2})$} \\
    \cline{2-10}
    & \textbf{VCA} & \textbf{SISAL} & \textbf{NMF-QMV} & \textbf{Endnet} & \textbf{MiSiCNet} & \textbf{ADMMNet} & \textbf{EDAA} & \textbf{UNMamba} & \textbf{Proposed} \\
    \hline
    \#1 Asph. & 54.30 $\pm$ $>$9 & 89.21 $\pm$ 0.0 & 42.81 $\pm$ 0.0 &  6.44 $\pm$ 0.9 & 12.98 $\pm$ 0.0 & $>$99 $\pm$ 0.0 &  \underline{4.34 $\pm$ 0.0} & 18.44 $\pm$ 0.2 &  \textbf{3.50 $\pm$ 0.2} \\
    \#2 Grass & 46.65 $\pm$  8.4 & 31.42 $\pm$ 0.0 & 37.27 $\pm$ 0.0 &  \underline{9.46 $\pm$ 1.1} & 18.36 $\pm$ 0.0 & 36.17 $\pm$ 0.0 & 19.74 $\pm$ 0.0 & 16.27 $\pm$ 0.3 &  \textbf{5.08 $\pm$ 0.1} \\
    \#3 Tree  & 34.54 $\pm$  3.3 & 22.06 $\pm$ 0.0 & 39.96 $\pm$ 0.0 & 13.15 $\pm$ 0.6 & 16.62 $\pm$ 0.0 & 26.35 $\pm$ 0.0 & 19.73 $\pm$ 0.0 &  \underline{8.50 $\pm$ 0.0} &  \textbf{7.48 $\pm$ 0.2} \\
    \#4 Roof  & 71.74 $\pm$ $>$9 & 29.53 $\pm$ 0.0 & 85.19 $\pm$ 0.0 & 36.90 $\pm$ 2.5 & 23.53 $\pm$ 0.0 & 40.85 $\pm$ 0.0 & 17.79 $\pm$ 0.0 &  \underline{6.92 $\pm$ 0.1} &  \textbf{6.59 $\pm$ 0.6} \\
    \#5 Metal & 64.28 $\pm$  8.7 & 44.79 $\pm$ 0.0 & 49.18 $\pm$ 0.0 & \textbf{11.11 $\pm$ 1.2} & 15.61 $\pm$ 0.0 & 45.09 $\pm$ 0.0 & 44.70 $\pm$ 0.0 & 22.66 $\pm$ 0.2 & \underline{11.58 $\pm$ 0.8} \\
    \#6 Dirt  & 39.63 $\pm$ $>$9 & 24.72 $\pm$ 0.0 & 49.24 $\pm$ 0.0 & 29.28 $\pm$ 2.6 & 34.06 $\pm$ 0.0 & 24.20 $\pm$ 0.0 & 18.75 $\pm$ 0.0 & \textbf{10.25 $\pm$ 0.4} & \underline{11.17 $\pm$ 0.2} \\
    \hline
    Avg.      & 51.85 $\pm$  1.9 & 40.29 $\pm$ 0.0 & 50.61 $\pm$ 0.0 & 17.72 $\pm$ 0.7 & 20.19 $\pm$ 0.0 & 47.23 $\pm$ 0.0 & 20.84 $\pm$ 0.0 & \underline{13.84 $\pm$ 0.1} & \textbf{7.57 $\pm$ 0.2} \\
    \hline
  \end{tabular}}}
  
  \,

  {\scriptsize
  \resizebox{\textwidth}{!}{%
  \begin{tabular}{|
      p{1.25cm}||
      >{\centering\arraybackslash}p{1.5cm}
      >{\centering\arraybackslash}p{1.5cm}
      >{\centering\arraybackslash}p{1.7cm}
      >{\centering\arraybackslash}p{1.6cm}
      >{\centering\arraybackslash}p{1.6cm}
      >{\centering\arraybackslash}p{1.6cm}
      >{\centering\arraybackslash}p{1.6cm}
      >{\centering\arraybackslash}p{1.6cm}
      >{\centering\arraybackslash}p{1.7cm}|
    }
    \hline
    \multirow{2}{*}{Endm.} &
    \multicolumn{9}{c|}{\textbf{Abundance Root Mean Square Error (RMSE)} $(\times 10^{-2})$} \\
    \cline{2-10}
    & \textbf{VCA} & \textbf{SISAL} & \textbf{NMF-QMV} & \textbf{Endnet} & \textbf{MiSiCNet} & \textbf{ADMMNet} & \textbf{EDAA} & \textbf{UNMamba} & \textbf{Proposed} \\
    \hline
    \#1 Asph. & 29.56 $\pm$ 5.2 & 33.97 $\pm$ 0.0 & 25.40 $\pm$ 0.0 & 30.26 $\pm$ 0.2 & 18.86 $\pm$ 0.0 & 31.16 $\pm$ 0.0 & \underline{15.33 $\pm$ 0.0} & 19.37 $\pm$ 0.2 & \textbf{10.40 $\pm$ 0.3} \\
    \#2 Grass & 35.16 $\pm$ 6.4 & 35.51 $\pm$ 0.0 & 27.97 $\pm$ 0.0 & 39.79 $\pm$ 0.1 & 20.37 $\pm$ 0.0 & 35.25 $\pm$ 0.0 & 43.24 $\pm$ 0.0 & \textbf{14.76 $\pm$ 0.1} & \underline{18.63 $\pm$ 0.2} \\
    \#3 Tree  & 18.07 $\pm$ 1.3 & 16.69 $\pm$ 0.0 & 19.87 $\pm$ 0.0 & 34.50 $\pm$ 0.1 & \underline{12.26 $\pm$ 0.0} & 18.87 $\pm$ 0.0 & 21.63 $\pm$ 0.0 & 12.96 $\pm$ 0.5 & \textbf{11.15 $\pm$ 0.1} \\
    \#4 Roof  & 17.49 $\pm$ 1.6 & 16.39 $\pm$ 0.0 & 17.94 $\pm$ 0.0 & 21.11 $\pm$ 0.1 &  \underline{7.12 $\pm$ 0.0} & 16.56 $\pm$ 0.0 &  \textbf{4.82 $\pm$ 0.0} &  8.23 $\pm$ 0.3 &  8.58 $\pm$ 0.3 \\
    \#5 Metal & 13.17 $\pm$ 2.9 & 18.31 $\pm$ 0.0 & 16.82 $\pm$ 0.0 & 17.17 $\pm$ 0.0 &  \underline{9.11 $\pm$ 0.0} & 12.92 $\pm$ 0.0 & 35.64 $\pm$ 0.0 &  9.60 $\pm$ 0.3 &  \textbf{8.34 $\pm$ 0.4} \\
    \#6 Dirt  & 36.71 $\pm$ 9.4 & 29.95 $\pm$ 0.0 & 25.04 $\pm$ 0.0 & 20.85 $\pm$ 0.1 & 25.00 $\pm$ 0.0 & 33.19 $\pm$ 0.0 & \textbf{13.09 $\pm$ 0.0} & 19.30 $\pm$ 0.5 & \underline{15.38 $\pm$ 0.3} \\
    \hline
    Avg.      & 25.03 $\pm$ 3.6 & 25.14 $\pm$ 0.0 & 22.17 $\pm$ 0.0 & 27.28 $\pm$ 0.1 & 15.45 $\pm$ 0.0 & 24.66 $\pm$ 0.0 & 22.29 $\pm$ 0.0 & \underline{14.04 $\pm$ 0.2} & \textbf{12.08 $\pm$ 0.2} \\
    \hline
    \hline
    Time (s) & 0.32 & 1.79 & 81.98 & 441.74 & 253.32 & 608.35 & 182.85 & 41.19 & \textcolor{colorEX1}{21.02} $+$ \textcolor{colorEX2}{1.15} \\
    \hline
  \end{tabular}}}
\end{table*}


\subsection{Results}

\Cref{tab:results_samson,tab:results_urban,tab:results_jasper_ridge} report the resulting SAD and RMSE for each material (rows), as well as the average value (Avg.). 
The means and standard deviations over 10 runs are given. 
We also give computation times (in seconds). For our method, we separate the clustering model time (left) and our segmentation-to-unmixing model time (right). 
The visualizations are provided in the supplementary materials.

In terms of endmember recovery (SAD), our method achieves the best average performance across all datasets. 
For abundances (RMSE), we obtain the best results on Samson and Urban-6; for Jasper Ridge, although EDAA achieves better results, our RMSE remains competitive, ranking second and close to the top performer. 
For all datasets, our method consistently yields small (near-best) SADs and RMSEs for every endmember, which is not necessarily observed for the other approaches. 
Overall, these results demonstrate its effectiveness, especially in a more challenging scenario (Urban-6), with a relatively high execution speed. 
They confirm that our method, with a appropriately-chosen segmentation or clustering method, can surpass deep and non-deep state-of-the-art approaches. 

\subsection{Impact of input segmentation on unmixing performance}


Since our unmixing method naturally depends on the input segmentation, we assess here how segmentation quality affects unmixing performance.

\Cref{tab:tab1} reports the unmixing results obtained with our segmentation-to-unmixing model over six input classification maps: the ground truth and the outputs of five clustering algorithms, namely k-Means, GMM, Spectral Clustering~\cite{ng2001spectral}, Self-supervised Double-Structure Transformer (SDST)~\cite{10462168}, and Elastic Graph Fusion Subspace Clustering (EGFSC)~\cite{zhang2025elastic}. 
For each input classification map, we report its accuracy (Acc.), together with the abundance RMSE and endmember SAD produced by our unmixing method, averaged over materials.

These results highlight the importance of the clustering model chosen to unmix a dataset, especially in challenging scenarios like Urban-6. 
Overall, EGFSC provides the best performance among the compared methods, both in terms of segmentation accuracy (Acc.) and unmixing quality (SAD and RMSE), except on Samson for which a GMM is associated with better unmixing results despite providing the lowest segmentation accuracy. 
The results, in particular for Samson and Jasper Ridge, also show that the ground-truth segmentation does not necessarily lead to the best unmixing performance: although it provides the correct class labels, the separating hyperplanes learned by the SVM~\cite{pedregosa2011scikit} are determined by margin maximization rather than by the underlying physical mixing process. As a result, the induced decision boundaries may be suboptimal for endmember and abundance estimation.
This may occur when the pixels of one class lie much closer to the true separating hyperplane than those of the opposite class; misclassifying part of them could then help better fit hyperplanes to the ground-truth ones, and thus improve unmixing performance.

We also evaluated our method on ground-truth segmentation maps corrupted by different levels of label noise, using the same experimental protocol as in~\cite{bottenmuller2025improving}. 
\Cref{tab:tab2} shows that our method remains robust to uniformly distributed random label noise. This can be explained by the well-known robustness of SVMs to uniformly misclassified samples, since the solution mainly depends on support vectors, which may remain stable or be only mildly perturbed under such noise. 
Interestingly, introducing a small amount of noise can even improve the predictions, as observed for Jasper Ridge and Urban-6. One possible explanation is that the noise leads to a better balance between classes and relaxes the position of the support vectors. 
However, the previous experiment (see \Cref{tab:tab1}) suggests that the method remains sensitive to the geometry of the class regions in the classification map, since SVMs are particularly influenced by boundary pixels.


\definecolor{cellcolor1}{rgb}{1.00, 1.00, 1.00}

\newcommand{\noiseheader}[2]{%
  \makecell[c]{%
    \begin{tabular}{@{}c@{\hspace{0.4em}}c@{}}
      \raisebox{-0.5\height}{\textbf{#1\%}} &
      \raisebox{-0.5\height}{\includegraphics[width=.080\linewidth]{#2}}
    \end{tabular}%
  }%
}

\begin{table*}[t]
\caption{Evaluation of the method on different clustering algorithms (Avg. values, $\times 10^{-2}$).}
\label{tab:tab1}
  \centering
  \small
  \setlength{\tabcolsep}{8pt}
  \renewcommand{\arraystretch}{1.2}
  {\scriptsize
  \resizebox{\textwidth}{!}{%
    \begin{tabular}{|l||c|c|c|c|c|c|}
      \hline
      \diagbox{Data}{Method} 
      & \textbf{Ground Truth}
      & \textbf{k-Means}
      & \textbf{GMM}
      & \textbf{Spectral Cl.}
      & \textbf{SDST}
      & \textbf{EGFSC} \\
      \hline
      \hline
      \cellcolor{cellcolor1}Samson & 
      $\begin{array}{c}\text{Acc.: }100.00\\\text{SAD: }2.79\\\text{RMSE: }3.43\end{array}$ &
      $\begin{array}{c}\text{Acc.: } 97.28\\\text{SAD: }2.90\\\text{RMSE: }3.63\end{array}$ &
      $\begin{array}{c}\text{Acc.: } 93.03\\\text{SAD: }2.73\\\text{RMSE: }3.52\end{array}$ &
      $\begin{array}{c}\text{Acc.: } 96.03\\\text{SAD: }2.87\\\text{RMSE: }3.56\end{array}$ &
      $\begin{array}{c}\text{Acc.: } 96.04\\\text{SAD: }2.94\\\text{RMSE: }3.68\end{array}$ &
      $\begin{array}{c}\text{Acc.: } 94.54\\\text{SAD: }2.77\\\text{RMSE: }3.55\end{array}$ \\
      \hline
      \cellcolor{cellcolor1}Jasper Ridge & 
      $\begin{array}{c}\text{Acc.: }100.00\\\text{SAD: } 4.41\\\text{RMSE: } 7.87\end{array}$ &
      $\begin{array}{c}\text{Acc.: } 72.73\\\text{SAD: }10.52\\\text{RMSE: }16.42\end{array}$ &
      $\begin{array}{c}\text{Acc.: } 75.39\\\text{SAD: } 8.37\\\text{RMSE: }12.88\end{array}$ &
      $\begin{array}{c}\text{Acc.: } 77.87\\\text{SAD: } 6.75\\\text{RMSE: }12.52\end{array}$ &
      $\begin{array}{c}\text{Acc.: } 83.76\\\text{SAD: } 5.85\\\text{RMSE: } 8.28\end{array}$ &
      $\begin{array}{c}\text{Acc.: } 88.56\\\text{SAD: } 5.55\\\text{RMSE: } 8.08\end{array}$ \\
      \hline
      \cellcolor{cellcolor1}Urban-6 & 
      $\begin{array}{c}\text{Acc.: }100.00\\\text{SAD: } 6.82\\\text{RMSE: }11.07\end{array}$ &
      $\begin{array}{c}\text{Acc.: } 68.09\\\text{SAD: }18.24\\\text{RMSE: }19.39\end{array}$ &
      $\begin{array}{c}\text{Acc.: } 71.73\\\text{SAD: }15.55\\\text{RMSE: }18.70\end{array}$ &
      $\begin{array}{c}\text{Acc.: } 76.65\\\text{SAD: }23.12\\\text{RMSE: }17.82\end{array}$ &
      $\begin{array}{c}\text{Acc.: } 67.89\\\text{SAD: }17.87\\\text{RMSE: }19.72\end{array}$ &
      $\begin{array}{c}\text{Acc.: } 77.37\\\text{SAD: } 7.57\\\text{RMSE: }12.08\end{array}$ \\
      \hline
  \end{tabular}}}
\end{table*}

\begin{table*}[t]
\caption{Evaluation of the method robustness to uniform random noise (Avg. values, $\times 10^{-2}$).}
\label{tab:tab2}
  \centering
  \small
  \setlength{\tabcolsep}{8pt}
  \renewcommand{\arraystretch}{1.2}
  {\scriptsize
  \resizebox{\textwidth}{!}{%
    \begin{tabular}{|l||c|c|c|c|c|c|}
      \hline
      \diagbox{Data}{Noise} 
      & \noiseheader{1}{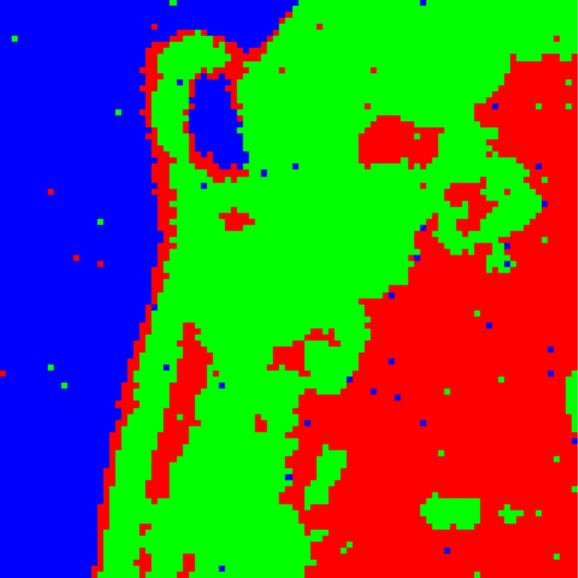}
      & \noiseheader{5}{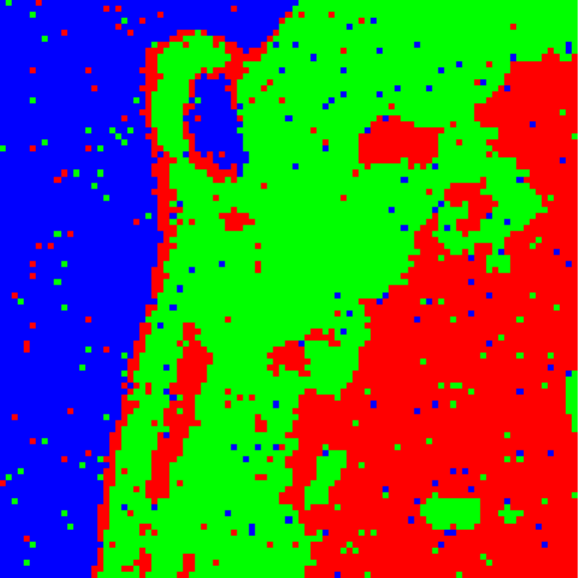}
      & \noiseheader{10}{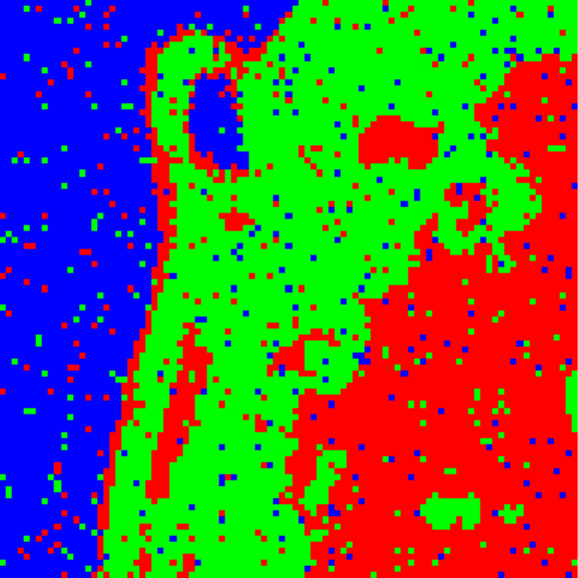}
      & \noiseheader{25}{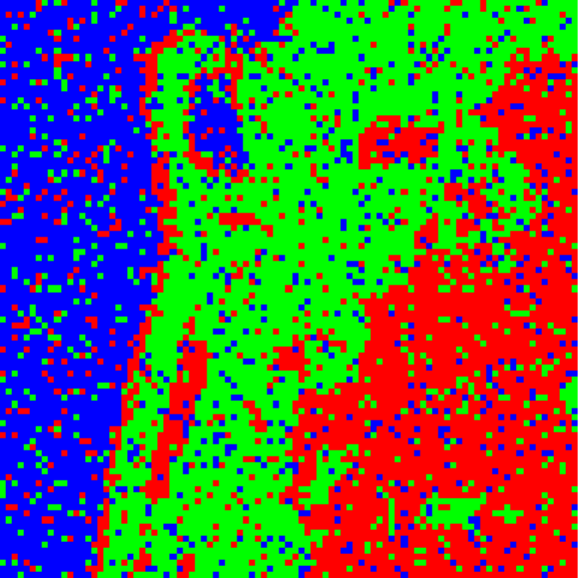}
      & \noiseheader{50}{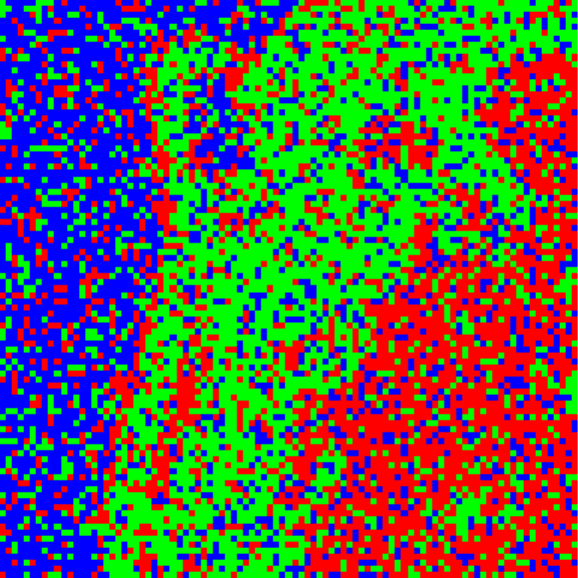}
      & \noiseheader{80}{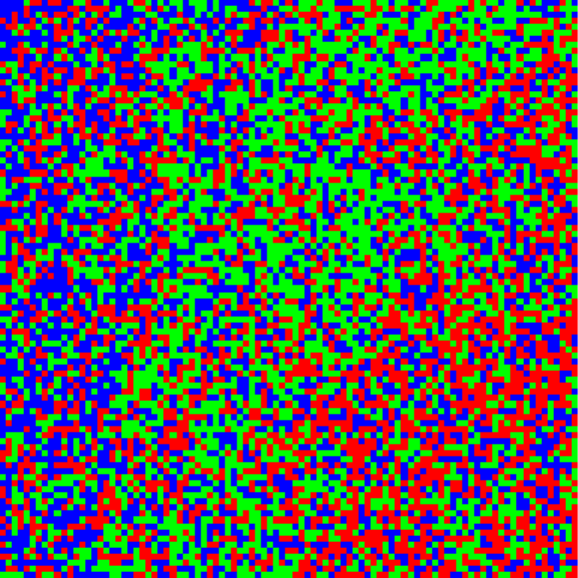} \\
      \hline
      \hline
      \cellcolor{cellcolor1}Samson & 
      $\begin{array}{c}\text{SAD: }2.76\\\text{RMSE: }3.41\end{array}$ &
      $\begin{array}{c}\text{SAD: }2.99\\\text{RMSE: }3.12\end{array}$ &
      $\begin{array}{c}\text{SAD: }3.32\\\text{RMSE: }2.91\end{array}$ &
      $\begin{array}{c}\text{SAD: }4.40\\\text{RMSE: }2.99\end{array}$ &
      $\begin{array}{c}\text{SAD: }6.38\\\text{RMSE: }3.48\end{array}$ &
      $\begin{array}{c}\text{SAD: }5.76\\\text{RMSE: }3.09\end{array}$ \\
      \hline
      \cellcolor{cellcolor1}Jasper Ridge & 
      $\begin{array}{c}\text{SAD: }4.39\\\text{RMSE: }7.78\end{array}$ &
      $\begin{array}{c}\text{SAD: }4.19\\\text{RMSE: }7.23\end{array}$ &
      $\begin{array}{c}\text{SAD: }4.23\\\text{RMSE: }6.97\end{array}$ &
      $\begin{array}{c}\text{SAD: }4.21\\\text{RMSE: }6.52\end{array}$ &
      $\begin{array}{c}\text{SAD: }3.98\\\text{RMSE: }6.07\end{array}$ &
      $\begin{array}{c}\text{SAD: }4.71\\\text{RMSE: }6.76\end{array}$ \\
      \hline
      \cellcolor{cellcolor1}Urban-6 & 
      $\begin{array}{c}\text{SAD: }6.71\\\text{RMSE: }10.92\end{array}$ &
      $\begin{array}{c}\text{SAD: }6.38\\\text{RMSE: } 9.90\end{array}$ &
      $\begin{array}{c}\text{SAD: }6.35\\\text{RMSE: } 9.61\end{array}$ &
      $\begin{array}{c}\text{SAD: }7.02\\\text{RMSE: } 9.58\end{array}$ &
      $\begin{array}{c}\text{SAD: }7.08\\\text{RMSE: } 9.36\end{array}$ &
      $\begin{array}{c}\text{SAD: }9.54\\\text{RMSE: }10.25\end{array}$ \\
      \hline
  \end{tabular}}}
\end{table*}


\section{Conclusion}
\label{sec:conclusion}

We introduced a new blind linear unmixing approach that bridges semantic segmentation and hyperspectral unmixing by inverting the standard pipeline. It estimates endmembers and abundances from any input classification map. 

The core result is a theorem showing that, under the linear mixing model, dominant-material regions partition spectral space into polyhedral cones. We leverage this property by using signed distances to polyhedral cones determined by support vector machines to perform unmixing in the blind setting.

The method is lightweight, easy to deploy, and essentially deterministic. It has a single hyperparameter controlling initial abundance saturation. On Samson, Jasper Ridge and Urban-6, it delivers consistent improvements over eight deep and non-deep baselines, especially for endmember recovery, while remaining competitive in abundance estimation and in computation time. Its effectiveness, however, depends on the quality of the classification map given as input.

Our theorem and proposed segmentation-driven pipeline open new directions for hyperspectral unmixing. We could exploit cone–endmember relations for direct endmember identification or incorporate the polyhedral-cone prior into deep architectures to guide learning and further improve unmixing quality.

\section*{Acknowledgements}
This work was funded by the French Agence Nationale de la Recherche (ANR) under project number ANR 22-CE42-0025.

\clearpage

\appendix
\renewcommand{\thesection}{\Alph{section}}
\titleformat{\section}
  {\normalfont\Large\bfseries}
  {Appendix \thesection.}
  {1em}
  {}
\crefalias{section}{appendix}

\section{Proofs}
\label{app:proofs}

This section provides the proofs of \Cref{lem:convexity,lem:partition} and of \Cref{them:main} from \Cref{sec:method}.

\subsection{Proof of \cref{lem:convexity}}

\begin{proof}
Let $x$ and $x'$ be two elements in $\mathbb{R}^d$. 
By \eqref{eq:linear_combination}, there exist $\lambda, \lambda' \in \mathbb{R}^m$ and $y, y' \in (\text{span}\{M_{:i}\}_{i=1}^m)^\perp$ such that 
\begin{equation*}
    x = \sum_{i=1}^m \lambda_i M_{:i} + y 
    ~~~~~~~\text{ and }~~~~~~~
    x' = \sum_{i=1}^m \lambda_i' M_{:i} + y' ,
\end{equation*}
where $\lambda$ and $\lambda'$ represent the linear coefficient vectors of $x$ and $x'$ over the endmembers $\{M_{:i}\}_{i=1}^m$, and $y$ and $y'$ the components of $x$ and $x'$ in the orthogonal complement $(\text{span}\{M_{:i}\}_{i=1}^m)^\perp$ of $\text{span}\{M_{:i}\}_{i=1}^m$ in $\mathbb{R}^d$, respectively. 

Let us suppose that they both belong to the same dominant-material region $\mathcal{R}_c$, where $c = 1, \ldots, m$, i.e., by Definition \ref{def:region}, $c \in \arg\max_{i = 1}^m \{a_i\}$ and $c \in \arg\max_{i = 1}^m \{a_i'\}$, where $\{a_i\}_{i=1}^m$ and $\{a_i'\}_{i=1}^m$ are the material abundances associated with $x$ and $x'$, respectively. Under the LMM, by linear independence of the endmembers in hyperspectral images, we have $\lambda = (a_1, \ldots, a_m)$ and $\lambda' = (a_1', \ldots, a_m')$. 

Let $x_\rho$ be the vector on the segment $[x,x']$, defined as
\begin{equation*}
    \begin{array}{rrl}
        x_\rho & := & \rho x + (1-\rho) x' \\
         & = & \sum_{i=1}^m (\rho \lambda_i + (1 - \rho) \lambda_i') M_{:i} + \rho y + (1 - \rho) y'
    \end{array}
\end{equation*}
for any $\rho \in [0,1]$.

By linear independence of the endmembers, $x_\rho$ has necessarily for linear coefficient vector $\rho \lambda + (1 - \rho) \lambda'$, which then represents its material abundance vector under the LMM. 
Because the $\arg\max$ is preserved under convex combinations, we have $c \in \arg\max_{i = 1}^m \{\rho \lambda_i + (1 - \rho) \lambda_i'\}$, and so $x_\rho$ belongs to $\mathcal{R}_c$. 
Dominant-material regions in $\mathbb{R}^d$ are therefore convex.
\end{proof}

\subsection{Proof of \cref{lem:partition}}

\begin{proof}
This is a direct consequence to the Hyperplane Separation Theorem (see Theorem 7.3 in~\cite{gallier2011geometric}), pairwise applied to the regions of a convex (finite $m$-) partition of the Euclidean space~{\normalfont{\cite{leon2018spaces}}}.
\end{proof}

\subsection{Proof of \cref{them:main}}

\begin{proof}
Dominant-material regions $\{\mathcal{R}_i\}_{i=1}^m$ form a finite $m$-partition of the spectral space $\mathbb{R}^d$, up to a set of Lebesgue measure zero, i.e., they satisfy the three following properties: 
\begin{itemize}
    \item $\forall i = 1, \ldots, m, \mathcal{R}_i \neq \emptyset$;
    \item $\bigcup_{i=1}^m \mathcal{R}_i = \mathbb{R}^d$;
    \item $\forall i,j = 1, \ldots, m, i \neq j \Rightarrow \mathcal{L}(\mathcal{R}_i \cap \mathcal{R}_j) = 0$,
\end{itemize}
where $\mathcal{L}$ denotes the Lebesgue measure in $\mathbb{R}^d$.

This partitioning property is a direct implication of the $\arg\max$ function properties, which define such regions (see \cref{def:region}). 
By \cref{lem:convexity}, dominant-material regions are convex, therefore forming a convex $m$-partition of $\mathbb{R}^d$ (up to a set of Lebesgue measure zero). 
By \cref{lem:partition}, they thus result in $m$ polyhedral regions of $\mathbb{R}^d$, whose boundaries, by the Hyperplane Separation Theorem, are formed by hyperplanes that pairwise separate the regions. 
These separation hyperplanes are given by the regions in $\mathbb{R}^d$ of dimension $d-1$ where the $\arg\max$ in \eqref{eq:region} is reached for two classes. 

The origin $0_d$ has $0$ for coefficient on every endmember. It thus belongs to every dominant-material region, as the $\arg\max$ is reached for every $0$ component of $0_d$. 
For every pair of classes, $0_d$ then belongs to their separation hyperplane. 
All separation hyperplanes therefore pass through the origin $0_d$, forming, by \cref{def:cone}, polyhedral-cone regions. 
This proves our main Theorem.
\end{proof}

\section{Visualizations}
\label{app:visualizations}

In this section, we provide the visualizations of the three hyperspectral datasets, the unmixing results (endmember signatures and abundance maps) given by the nine tested blind unmixing algorithms on the three datasets, and the classification maps resulting from the five clustering algorithms for hyperspectral image classification that were tested on the three datasets.

\subsection{Datasets}

We provide here the visualizations of the three hyperspectral datasets that are used to test our unmixing algorithm, namely Samson, Jasper Ridge and Urban-6, along with their associated ground-truth abundance maps and endmembers.\\

\noindent\textbf{1. Samson}\\[1pt]

\begin{figure}[!ht]
\centering
    \begin{subfigure}[b]{0.310\linewidth}
        \includegraphics[width=1.0\linewidth]{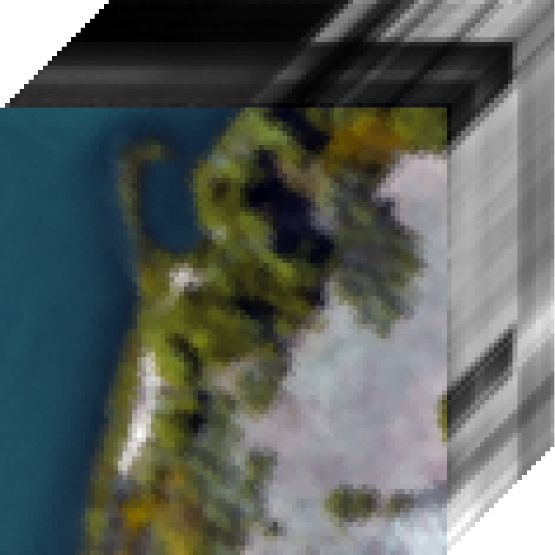}%
        \caption{Hyperspectral datacube.}
        \label{subfig:samson_y}
    \end{subfigure}
    \hfill
    \begin{subfigure}[b]{0.250\linewidth}
        \includegraphics[width=1.0\linewidth]{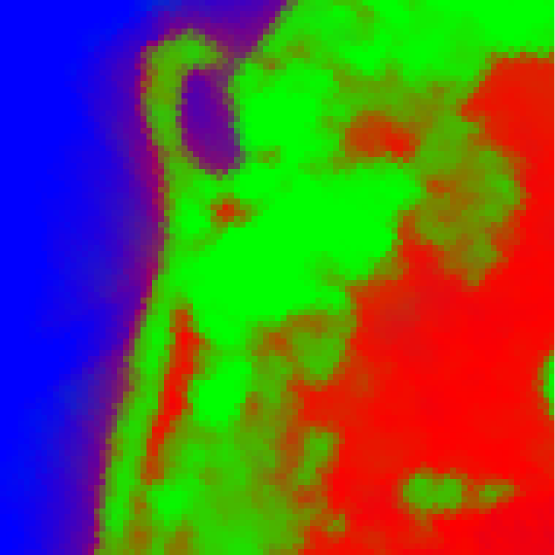}%
        \caption{GT: abundance map.}
        \label{subfig:samson_a}
    \end{subfigure}
    \hfill
    \begin{subfigure}[b]{0.400\linewidth}
        \includegraphics[width=1.0\linewidth]{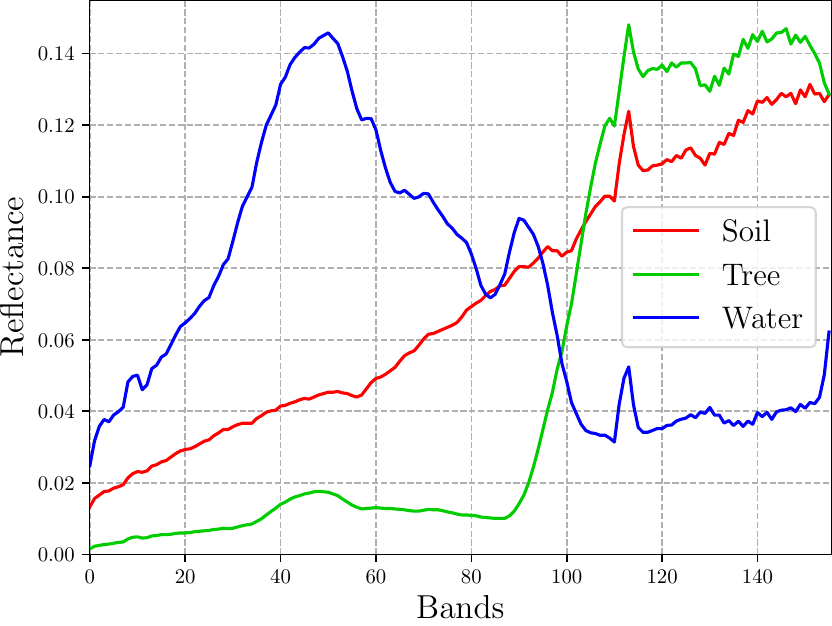}%
        \caption{GT: endmembers.}
        \label{subfig:samson_m}
    \end{subfigure}
    \caption{The Samson dataset (\ref{subfig:samson_y}) with its ground-truth (GT) abundance map (\ref{subfig:samson_a}) and endmembers (\ref{subfig:samson_m}). In subfigures (\ref{subfig:samson_a}) and (\ref{subfig:samson_m}), red is associated with the soil class, green with trees, and blue with water.}
    \label{fig:samson}
\end{figure}

\noindent\textbf{2. Jasper Ridge}\\[1pt]

\begin{figure}[!ht]
\centering
    \begin{subfigure}[b]{0.320\linewidth}
        \includegraphics[width=1.0\linewidth]{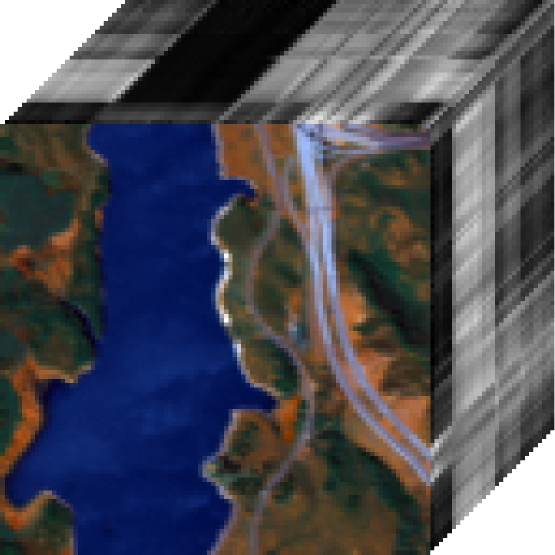}%
        \caption{Hyperspectral datacube.}
        \label{subfig:jasper_y}
    \end{subfigure}
    \hfill
    \begin{subfigure}[b]{0.250\linewidth}
        \includegraphics[width=1.0\linewidth]{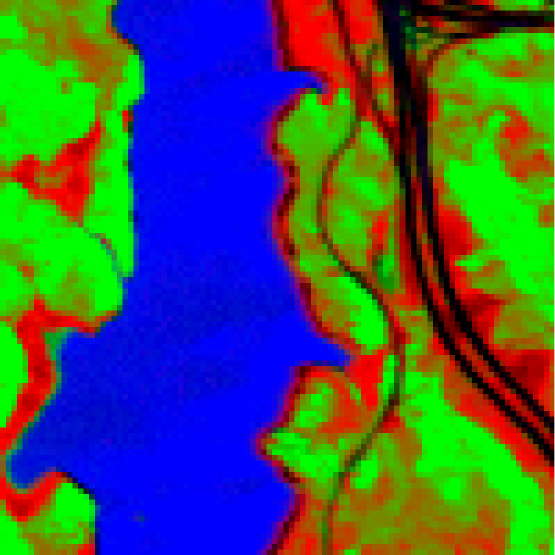}%
        \caption{GT: abundance map.}
        \label{subfig:jasper_a}
    \end{subfigure}
    \hfill
    \begin{subfigure}[b]{0.400\linewidth}
        \includegraphics[width=1.0\linewidth]{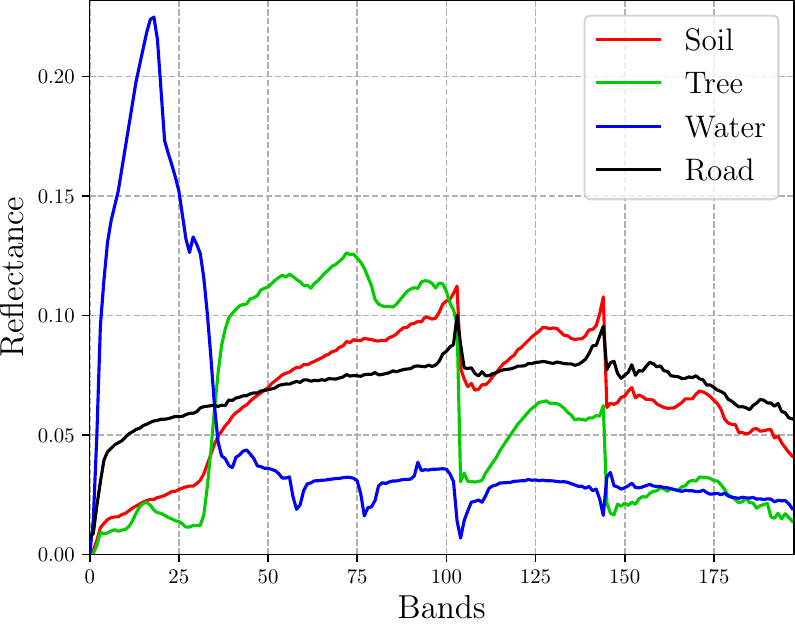}%
        \caption{GT: endmembers.}
        \label{subfig:jasper_m}
    \end{subfigure}
    \caption{The Jasper Ridge dataset (\ref{subfig:jasper_y}) with its ground-truth (GT) abundance map (\ref{subfig:jasper_a}) and endmembers (\ref{subfig:jasper_m}). In subfigures (\ref{subfig:jasper_a}) and (\ref{subfig:jasper_m}), red is associated with the soil class, green with trees, blue with water and black with roads.}
    \label{fig:jasper}
\end{figure}

\clearpage\newpage
\noindent\textbf{3. Urban-6}\\[1pt]

\begin{figure}[!ht]
\centering
    \begin{subfigure}[b]{0.310\linewidth}
        \includegraphics[width=1.0\linewidth]{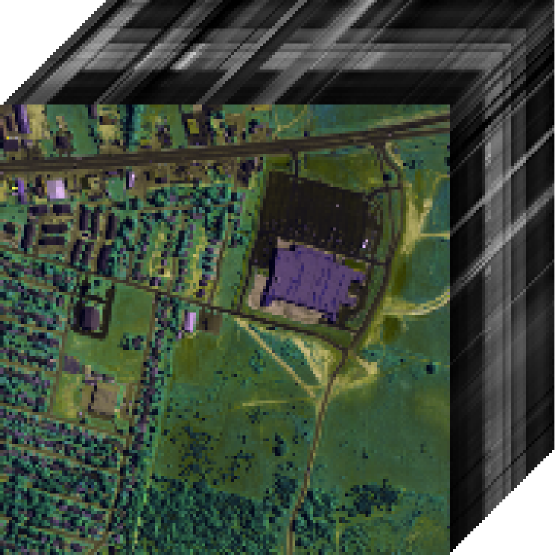}%
        \caption{Hyperspectral datacube.}
        \label{subfig:urban_y}
    \end{subfigure}
    \hfill
    \begin{subfigure}[b]{0.250\linewidth}
        \includegraphics[width=1.0\linewidth]{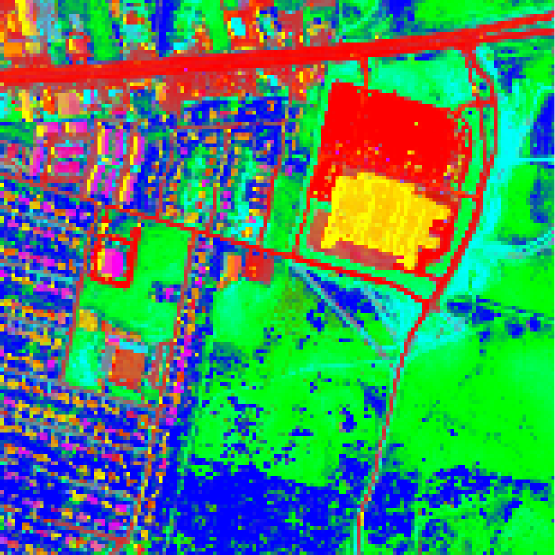}%
        \caption{GT: abundance map.}
        \label{subfig:urban_a}
    \end{subfigure}
    \hfill
    \begin{subfigure}[b]{0.400\linewidth}
        \includegraphics[width=1.0\linewidth]{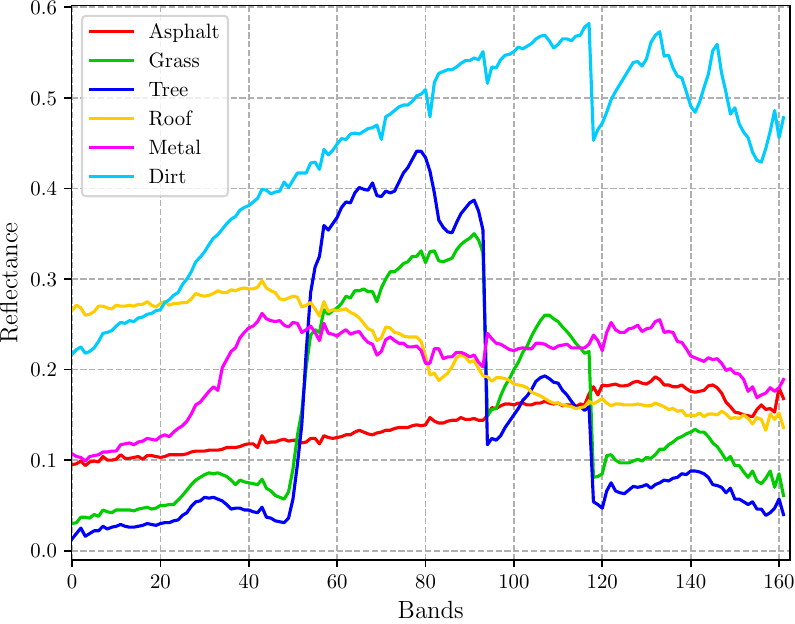}%
        \caption{GT: endmembers.}
        \label{subfig:urban_m}
    \end{subfigure}
    \caption{The Urban-6 dataset (\ref{subfig:urban_y}) with its ground-truth (GT) abundance map (\ref{subfig:urban_a}) and endmembers (\ref{subfig:urban_m}). In subfigures (\ref{subfig:urban_a}) and (\ref{subfig:urban_m}), red is associated with asphalt, green with grass, blue with trees, yellow with roofs, magenta with metal and cyan with dirt.}
    \label{fig:urban}
\end{figure}

\clearpage\newpage
\subsection{Unmixing Results}

We provide here the visualizations of the endmember signatures and abundance maps resulting from the nine blind unmixing algorithms that were tested in this paper, including our proposed polyhedral unmixing algorithm. 
Ground truths are also given as references. 
For each of the three datasets, we provide two figures representing (i) the endmember signatures and (ii) the abundance maps.

The endmember SADs and abundance RMSEs associated with the three datasets are given in \Cref{tab:results_samson,tab:results_jasper_ridge,tab:results_urban}, respectively (see \Cref{sec:experiments}).\\

\noindent\textbf{1. Samson}\\[1pt]

\begin{figure}[!ht]
    \centering
    \includegraphics[width=1.00\linewidth]{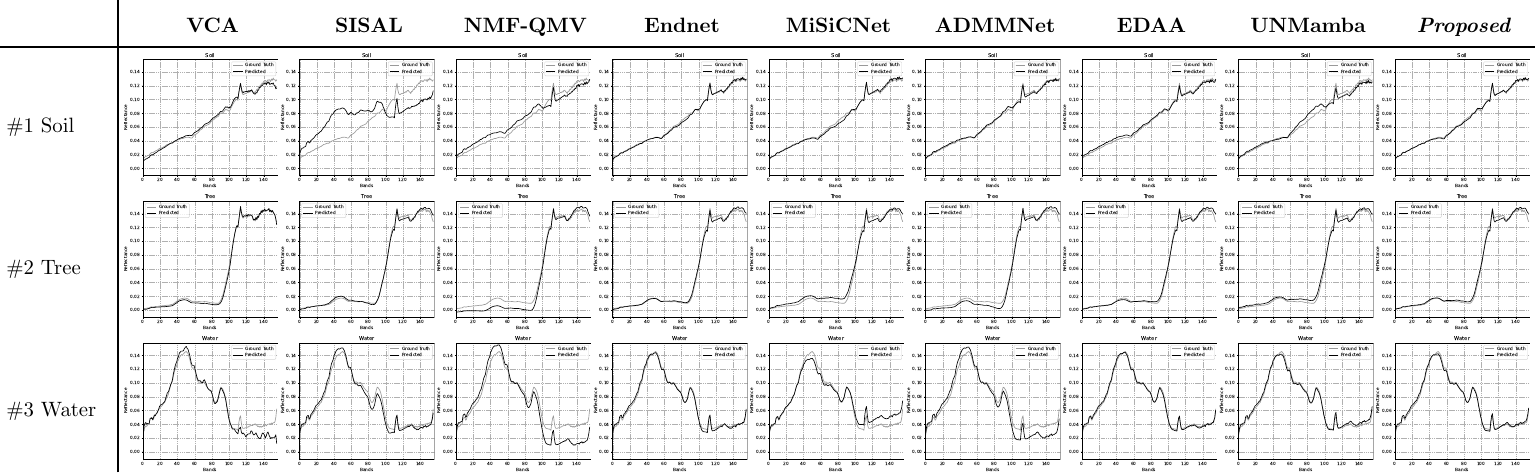}
    \caption{Visualizations of the endmember estimates on \textbf{Samson} given by the nine tested blind unmixing algorithms (columns). The ground-truth endmembers of the three materials (rows) are provided in the associated cells as light-gray line plots.}
    \label{fig:samson_M}
\end{figure}

\begin{figure}[!ht]
    \centering
    \includegraphics[width=1.00\linewidth]{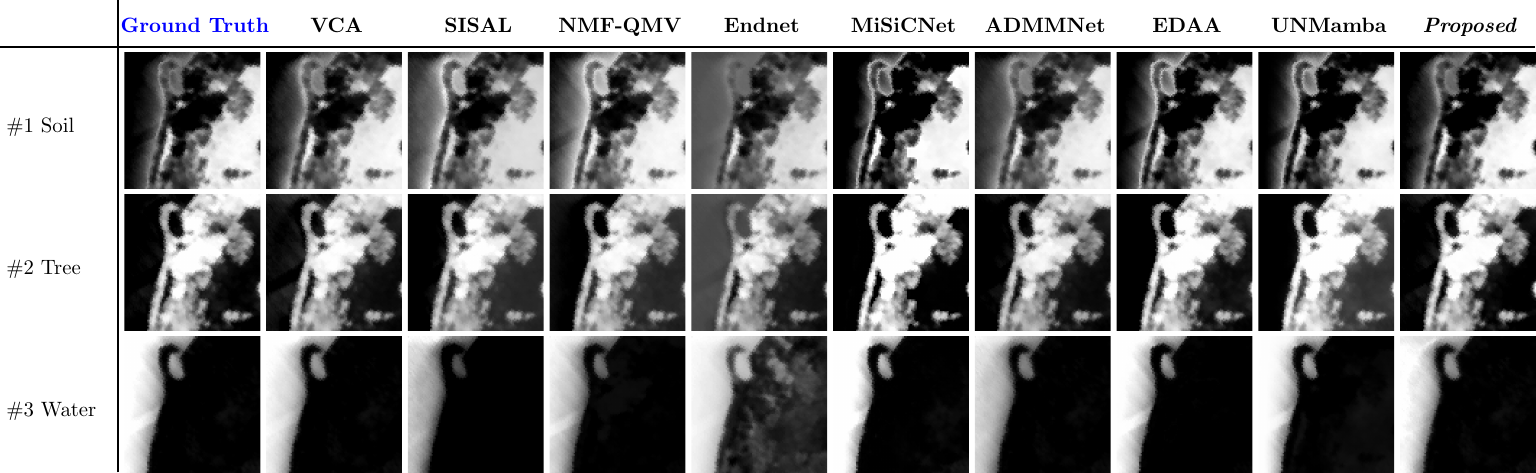}
    \caption{Visualizations of the material abundance estimates on \textbf{Samson} given by the nine tested blind unmixing algorithms (columns). The ground-truth abundance maps of the three materials (rows) are provided in the left column as references.}
    \label{fig:samson_A}
\end{figure}

\clearpage\newpage
\noindent\textbf{2. Jasper Ridge}\\[1pt]

\begin{figure}[!ht]
    \centering
    \includegraphics[width=1.00\linewidth]{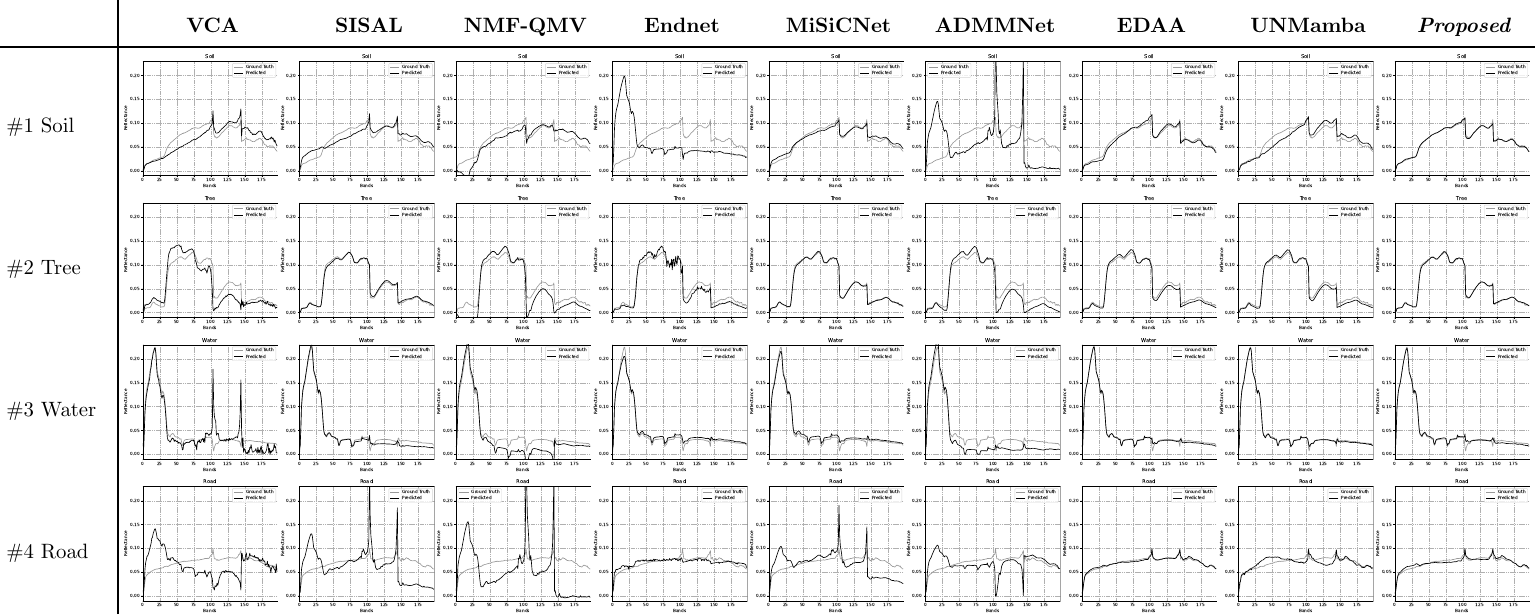}
    \caption{Visualizations of the endmember estimates on \textbf{Jasper Ridge} given by the nine tested blind unmixing algorithms (columns). The ground-truth endmembers of the four materials (rows) are provided in the associated cells as light-gray line plots.}
    \label{fig:jasper_M}
\end{figure}

\begin{figure}[!ht]
    \centering
    \includegraphics[width=1.00\linewidth]{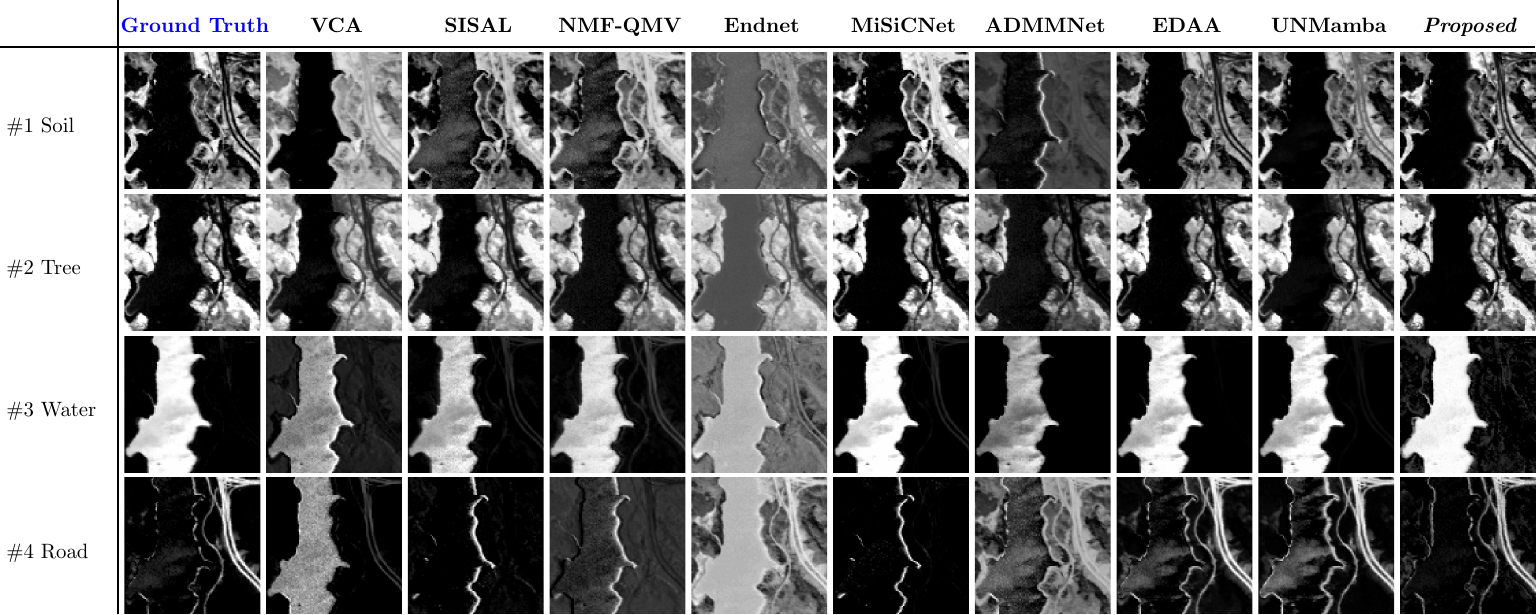}
    \caption{Visualizations of the material abundance estimates on \textbf{Jasper Ridge} given by the nine tested blind unmixing algorithms (columns). The ground-truth abundance maps of the four materials (rows) are provided in the left column as references.}
    \label{fig:jasper_A}
\end{figure}

\clearpage\newpage
\noindent\textbf{3. Urban-6}\\[0pt]

\vspace{18pt}
\noindent\begin{minipage}{\linewidth}
    \centering
    \captionsetup{type=figure}
    \includegraphics[width=1.00\linewidth]{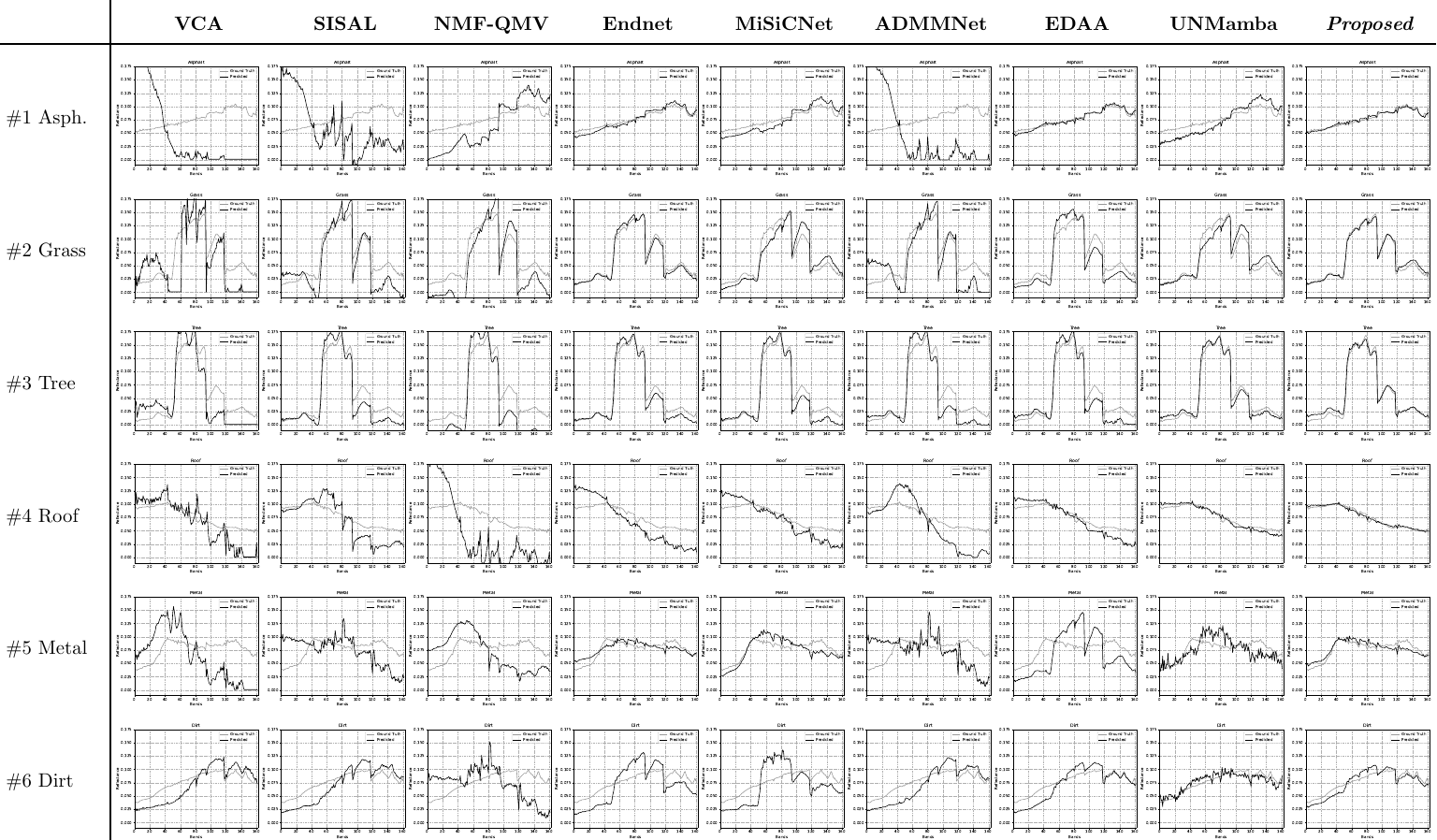}
    \caption{Visualizations of the endmember estimates on \textbf{Urban-6} given by the nine tested blind unmixing algorithms (columns). The ground-truth endmembers of the six materials (rows) are provided in the associated cells as light-gray line plots.}
    \label{fig:urban_M}
    
    \vspace{20pt}

    \centering
    \includegraphics[width=1.00\linewidth]{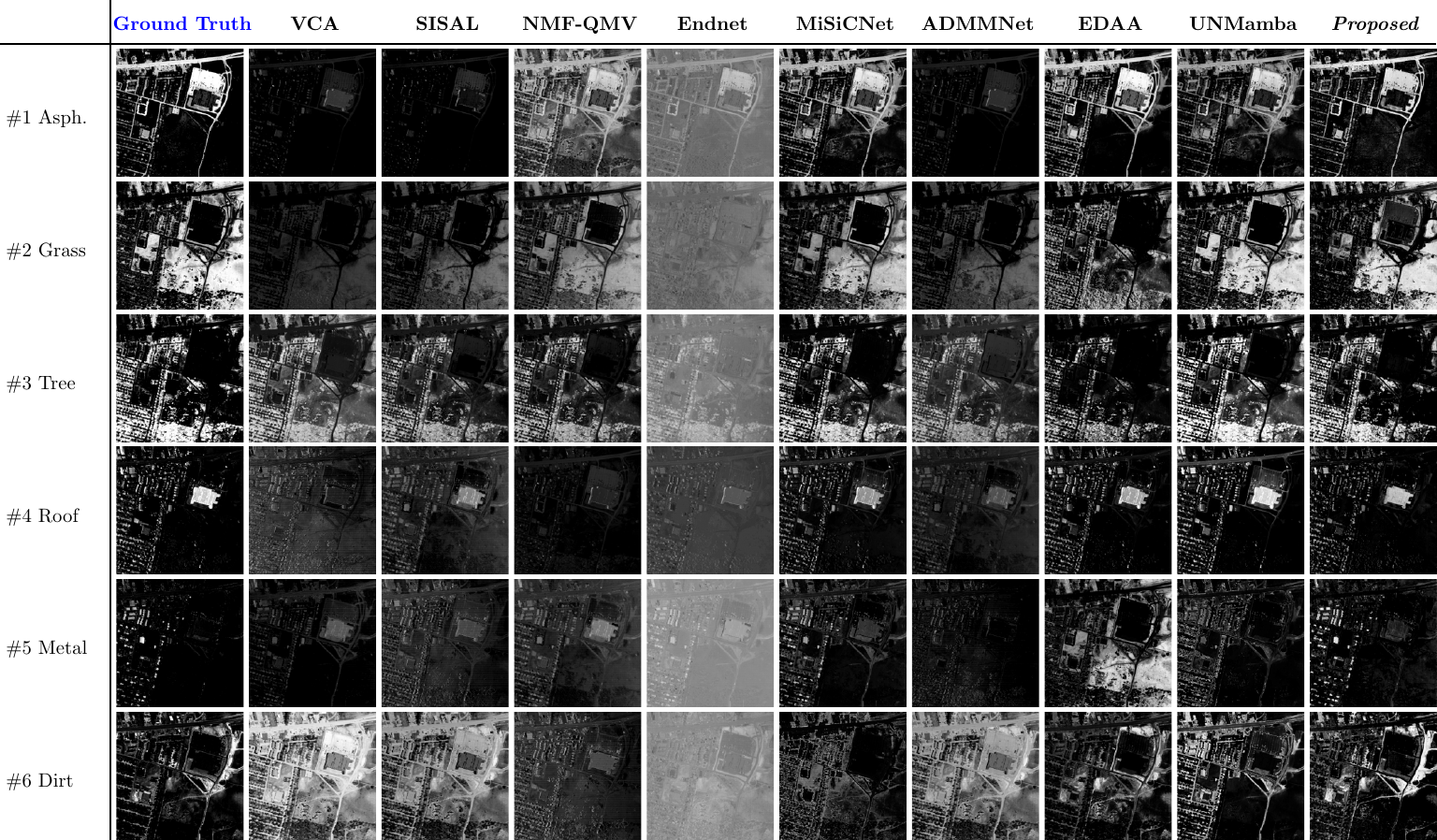}
    \caption{Visualizations of the material abundance estimates on \textbf{Urban-6} given by the nine tested blind unmixing algorithms (columns). The ground-truth abundance maps of the six materials (rows) are provided in the left column as references.}
    \label{fig:urban_A}
\end{minipage}

\subsection{Clustering Results}

We finally provide the classification maps resulting from the five clustering algorithms tested in this paper over the three hyperspectral datasets. The accuracy values between predicted and ground-truth classification maps, along with the estimated endmember SADs and abundance RMSEs resulting from the application of our polyhedral unmixing algorithm over the associated segmentations, are provided in \Cref{tab:tab1} (see \Cref{sec:experiments}).

\begin{figure}[!ht]
    \centering
    \includegraphics[width=1.00\linewidth]{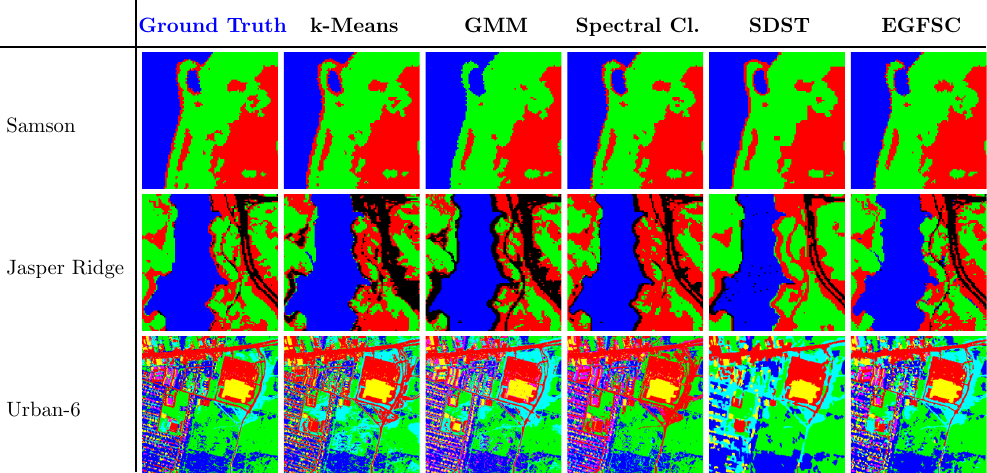}
    \caption{Visualizations of the classification maps given by the five tested clustering algorithms (columns) over the three hyperspectral datasets (rows). The ground-truth classification maps are provided in the left column as references.}
    \label{fig:clustering}
\end{figure}

\clearpage\newpage

%
%
\bibliographystyle{splncs04}
\bibliography{references}

\end{document}